\theoremstyle{plain}
\newtheorem{theorem}{Theorem}
\newtheorem{corollary}{Corollary}
\newtheorem{proposition}{Proposition}
\theoremstyle{definition}
\theoremstyle{remark}
\newtheorem*{remark}{Remark}
\author{
  {\normalsize Fr\'ed\'eric Magoul\`es}\thanks{CentraleSup\'elec, Universit\'e Paris-Saclay, France
    (correspondence, frederic.magoules@hotmail.com).}
  \and
  {\normalsize Guillaume Gbikpi-Benissan}\thanks{IRT SystemX, France.}
}
\title{Asynchronous parareal time discretization for partial differential equations}
\date{}
\begin{document}
\maketitle
\thispagestyle{fancy}

\begin{abstract}
\noindent Asynchronous iterations are more and more investigated for both scaling and fault-resilience purpose on high performance computing platforms. While so far, they have been exclusively applied within space domain decomposition frameworks, this paper advocates a novel application direction targeting time-decomposed time-parallel approaches. Specifically, an asynchronous iterative model is derived from the Parareal scheme, for which convergence and speedup analysis are then conducted. It turned out that Parareal and async-Parareal feature very close convergence conditions, asymptotically equivalent, including the finite-time termination property. Based on a computational cost model aware of unsteady communication delays, our speedup analysis shows the potential performance gain from asynchronous iterations, which is confirmed by some experimental case of heat evolution on a homogeneous supercomputer. This primary work clearly suggests possible further benefits from asynchronous iterations.
\end{abstract}

\begin{keywords}
time domain decomposition; Parareal; asynchronous iterations; time-parallel algorithms; time-dependent problems
\end{keywords}

\section{Introduction}

While high-performance computing (HPC) architectures are more and more parallel, the development of efficient and robust algorithms that can fully exploit high degrees of parallelism is still an actual challenge in many scientific computing fields. For solving partial differential equations (PDEs) defined on the interior of arbitrary geometrical shapes, strong advances resulted in highly scalable domain decomposition methods (DDMs) (see, e.g., \cite{Lions89, LeTallec:1994:DMC}), which feature high rates of speedup and efficiency. Further, although it first appears unnatural to break the time line causality in time-dependent differential equations, thorough applications of DDMs still led researchers to time-parallel methods for allowing actual parallel time-integration iterations. Waveform relaxation methods \cite{LelEtAl1982}, for instance, solve space-time PDEs by means of independent time-integration processes performed on subregions of the spatial domain, while ensuring global consistency through time-dependent interface conditions.
Multiple shooting \cite{CharPhil1993} and time-multigrid \cite{Leimkuhler1998} methods led to the more recent Parareal time discretization scheme \cite{LionsEtAl2001}, and a general analysis of parallel implicit time-integration algorithms \cite{FarChan2003}, where the time domain itself is decomposed. Basically, two levels of time discretization grid are considered. An approximated solution of the PDE on a coarse grid provides initial values on each time subinterval, when more accurate solutions are thus independently computed on the basis of a sufficiently fine time grid. These fine solutions are then used to enhance new another coarse solution. In this sense, the Parareal computational model can also be considered as an iterative predictor-corrector scheme. We refer to \cite{GanVan2007} for an introductory insight into time-decomposed time-integration approaches.

Despite such great advances with DDMs, there remain scaling limits upon these parallel algorithms, due to the fact that they require, at some point, sequential global iterations, even if each of these iterations is performed in a parallel fashion. This sequence requirement implies iterative global synchronization, which causes both bottleneck and fault-resilience issues at HPC scales. For such methods then, the only way to completely break sequential computation is to resort to asynchronous iterations \cite{BertTsit1991}, which therefore constitute today one of the most interesting algorithmic approaches toward HPC applications. Arisen from linear relaxation schemes (see \cite{ChazMir1969}), asynchronous iterations are basically described by a non-deterministic computational model, such that there is no more a single sequence of parallel iterations for approximating a global solution. Instead, one can independently run a proper sequence of iterations on each subregion of a domain, therefore communication can be completely overlapped by meaningful computation. So far, only fixed-point iterative models have been proved to be asynchronously convergent, under various general contraction conditions (see, e.g., \cite{Miellou1975, Baudet1978, ElTarazi1982, Bertsekas1983}). Still, despite the lower convergence rate of asynchronous iterations, their practical efficiency in execution time over classical fixed-point methods is undoubtedly established, thanks to extensive studies addressing various computational problems (see, e.g., \cite{LiShEtAl1985, HartMcCor1989, ChajZen1991, ChauEtAl2007, ChauEtAl2011}).
The asynchronous iterations theory has been largely applied within the framework of overlapping Schwarz domain decomposition methods (see, e.g., \cite{FromEtAl1997, SpitEtAl2001, SpitEtAl2003}), according to their tight relation with block-Jacobi and Gauss-Seidel relaxation schemes (see, e.g., \cite{Gand2008}). This included waveform relaxation methods as well (see, e.g., \cite{BahiEtAl1996, FromSzyld2000}), which constitutes so far the only application targeting time-parallel methods. Very recently, non-overlapping Schwarz methods have been investigated (see \cite{MagEtAl2017}), just as some spatial domain substructuring approach (see \cite{MagVen2018}). The goal of this paper is to introduce a novel asynchronous iterations application direction by addressing time-decomposed time-parallel algorithms through the well-known Parareal iterative scheme. Besides a successful experimental investigation in \cite{MagEtAl2018}, where we only focused on implementation aspects, we present here a theoretical analysis to fully assess both convergence conditions and speedup potential.

Section \ref{sec:async} provides a short general introduction to asynchronous iterations, with most important analysis tools for assessing their application to the Parareal iterative scheme. In section \ref{sec:apr} then, we deal with its asynchronous convergence analysis, after deriving the corresponding iterative model. Speedup analysis is conducted in section \ref{sec:perf}, where we first propose a computational cost model not neglecting communication effects. Section \ref{sec:res} finally presents practical performance experimented on a supercomputer, for a heat evolution problem, and our conclusions follow in section \ref{sec:conclu}.

\section{General asynchronous iterations}
\label{sec:async}

\subsection{Computational model}

Asynchronous iterations arose in the framework of linear relaxation methods, from the successive experimental and theoretical works of Rosenfeld \cite{Rosenfeld1969} and Chazan \& Miranker \cite{ChazMir1969}, respectively. Given a linear system
\[
A x = b,
\]
where $x$ is a vector of $n$ unknowns, let $M$ be a nonsingular matrix and consider the mapping $f$ defined by
\[
f(x) = (I - M^{-1} A) x + M^{-1} b,
\]
with $I$ denoting the identity matrix. Given then that
\[
A = M - (M - A),
\]
this linear system can be seen as a fixed-point problem, as we have
\[
A x = b \quad \iff \quad x = f(x).
\]
For instance, the Jacobi relaxation scheme consists of taking $M$ as the diagonal part of $A$. A classical iterative fixed point search generates a sequence $\{x^{k}\}_{k \in \mathbb{N}}$ of vectors such that
\begin{equation}
\label{eq:si}
x^{k+1} = f(x^{k}),
\end{equation}
and such that, for any given initial vector $x^{0}$,
\[
\lim_{k \to +\infty} x^{k} = x^{*},
\]
with $x^{*}$ being the unique fixed point of $f$. In a parallel computation context with $p$ processors, $p \le n$, one may assume a decomposition of the form
{
\[
x = (x_{1}, \ldots, x_{p})^{\mathsf T}, \qquad f(x) = (f_{1}(x), \ldots, f_{p}(x))^{\mathsf T},
\]
}
which gives an equivalent sequence of parallel iterations
\[
x_{i}^{k+1} = f_{i}(x_{1}^{k}, \ldots, x_{p}^{k}), \qquad \forall i \in \{1, \ldots, p\}.
\]
The first step toward asynchronous iterations is the introduction of the free steering (see, e.g., \cite{Schechter1959}), where, at each iteration, one is allowed to relax an arbitrary subset of unknowns. Note however that parallel computing were not yet targeted at this time, and that, except for free steering, the choice of the unknown(s) to be relaxed was based on deterministic criteria (like Gauss-Seidel relaxation, for instance, which follows a cyclic pattern). Let us then consider, at each iteration, an arbitrary subset
\[
P^{(k)} \subseteq \{1, \ldots, p\},
\]
defining the relaxed subvectors of unknowns. The corresponding free-steering iterative model is given by
\[
x_{i}^{k+1} = \left \{
\begin{array}{ll}
f_{i}\left(x_{1}^{k}, \ldots, x_{p}^{k}\right), & \forall i \in P^{(k)},\\
x_{i}^{k}, & \forall i \notin P^{(k)},
\end{array}
\right.
\]
where iterations
\[
x_{i}^{k+1} = x_{i}^{k}, \quad \forall i \notin P^{(k)},
\]
are only implicit.
Then, while considering actual parallel free steering, one would be interested in letting each subvector be continuously relaxed at a random pace. In this case therefore, there is no more explicit global iterations, and the sequence $\{P^{(k)}\}_{k \in \mathbb{N}}$ instead models implicit global iterations by accounting each concurrent update of at least one subvector. In such a parallel context, Chazan \& Miranker \cite{ChazMir1969} took into account the delay of accessing a subvector, in the sense that this subvector could be relaxed in the meantime, and therefore could become outdated before being used. Consequently, a subvector $x_{i}^{k+1}$, $i \in \{1, \ldots, p\}$, is no more necessarily computed on the basis of subvectors $x_{j}^{k}$, $j \in \{1, \ldots, p\}$, but on the basis of possibly outdated subvectors $x_{j}^{\tau_{j}^{i}(k)}$, i.e. with
\[
\tau_{j}^{i}(k) \le k.
\]
This yielded asynchronous iterations
\begin{equation}
\label{eq:ai}
x_{i}^{k+1} = \left \{
\begin{array}{ll}
f_{i}\left(x_{1}^{\tau_{1}^{i}(k)}, \ldots, x_{p}^{\tau_{p}^{i}(k)}\right), & \forall i \in P^{(k)},\\
x_{i}^{k}, & \forall i \notin P^{(k)}.
\end{array}
\right.
\end{equation}
Two straight assumptions are made to allow for convergence. First, no component $i \in \{1, \ldots, p\}$ should definitively stop being updated, which implies that $i$ belongs to successive subsets $P^{(k)}$ infinitely often, i.e.
\begin{equation}
\label{eq:ai_ass1}
\forall i \in \{1, \dots, p\}, \quad \operatorname{card}\{k \in \mathbb{N} \ | \ i \in P^{(k)}\} = +\infty,
\end{equation}
where $\operatorname{card}$ denotes the cardinality. Secondly, delays to access updated components are bounded, so that
\begin{equation}
\label{eq:ai_ass2}
\forall i, j \in \{1, \dots, p\}, \quad \lim_{k \to +\infty} \tau_{j}^{i}(k) = +\infty.
\end{equation}
Note finally that classical iterations are thus specialization of asynchronous iterations where
\[
P^{(k)} = \{1, \ldots, p\}, \quad \tau_{j}^{i}(k) = k, \qquad \forall k \in \mathbb{N}.
\]

More generally now, we can consider an arbitrary, non necessarily linear, mapping
\[
\widetilde f : E^{m} \to E, \qquad m \in \mathbb{N},
\]
with $E^{m}$ being the Cartesian product of $m$ sets $E = E_{1} \times \cdots \times E_{p}$, and a fixed-point problem
\[
x = \widetilde f(x, x, \ldots, x).
\]
The study of such a generalization goes back to Baudet \cite{Baudet1978} for addressing Newton asynchronous iterations. This generalizes the model \eqref{eq:ai} to
\begin{equation}
\label{eq:gai}
x_{i}^{k+1} = \left \{
\begin{array}{ll}
\widetilde f_{i}\left(x_{1}^{\tau_{1,1}^{i}(k)}, \ldots, x_{p}^{\tau_{p,1}^{i}(k)}, \ldots, x_{1}^{\tau_{1,m}^{i}(k)}, \ldots, x_{p}^{\tau_{p,m}^{i}(k)}\right), & \forall i \in P^{(k)},\\
x_{i}^{k}, & \forall i \notin P^{(k)},
\end{array}
\right.
\end{equation}
in the sense that \eqref{eq:ai} corresponds to the case $m = 1$.

\subsection{Convergence conditions}

It is well known (see, e.g., Theorem 2.1 in \cite{BahiEtAl2007}) that the sequential iterative model \eqref{eq:si} is convergent if and only if
\[
\rho(I - M^{-1} A) < 1,
\]
where $\rho(.)$ denotes the spectral radius. Chazan \& Miranker \cite{ChazMir1969} showed that the asynchronous iterative model \eqref{eq:ai} is convergent if and only if
\[
\rho(|I - M^{-1} A|) < 1,
\]
where $|.|$ denotes the absolute value, which is slightly more restrictive, since we always have (see, e.g., Corollary 6.3 in \cite{BertTsit1989})
\[
\rho(I - M^{-1} A) \le \rho(|I - M^{-1} A|).
\]
Still, to study the application of asynchronous iterations to the Parareal scheme, we would rather need the framework of arbitrary mappings, for which we recall here two most general sufficient contraction conditions guaranteeing asynchronous convergence.

Let each set $E_{i}$, with $i \in \{1, \ldots, p\}$, be a Banach space provided with some norm $\|.\|_{(i)}$, and consider weighted maximum norms
\begin{equation}
\label{eq:wmn}
\|x\|_{\infty}^{w} = \max_{i \in \{1, \ldots, p\}} \frac{\|x_{i}\|_{(i)}}{w_{i}},
\end{equation}
where $w$ is a positive vector with real entries $w_{i}$, $i \in \{1, \ldots, p\}$. The following result goes back to El Tarazi \cite{ElTarazi1982}:
\begin{theorem}[El Tarazi, 1982]
\label{theo:et1982b}
The asynchronous iterative model \eqref{eq:gai} is convergent if there exists a positive vector $w$ and a positive real $\alpha < 1$ such that
\[
\forall X, Y \in E^{m}, \quad \|\widetilde f(X) - \widetilde f(Y)\|_{\infty}^{w} \le \alpha \max \{\|x^{(1)} - y^{(1)}\|_{\infty}^{w}, \ldots, \|x^{(m)} - y^{(m)}\|_{\infty}^{w}\},
\]
with
\[
X = (x^{(1)}, \ldots, x^{(m)}), \quad Y = (y^{(1)}, \ldots, y^{(m)}).
\]
\end{theorem}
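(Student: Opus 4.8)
The plan is to use the classical nested-boxes (contracting-sets) technique for asynchronous convergence. The first step is to extract a genuine fixed point from the contraction hypothesis. Observe that the diagonal map $g(x) := \widetilde f(x, x, \ldots, x)$ satisfies, upon taking all $x^{(s)} = x$ and all $y^{(s)} = y$ in the hypothesis,
\[
\|g(x) - g(y)\|_{\infty}^{w} \le \alpha \, \|x - y\|_{\infty}^{w},
\]
so $g$ is an $\alpha$-contraction on $(E, \|.\|_{\infty}^{w})$, which is a Banach space since the weighted maximum norm is equivalent to the product norm of the complete spaces $E_{i}$. The Banach fixed-point theorem then supplies a unique $x^{*} \in E$ with $x^{*} = \widetilde f(x^{*}, \ldots, x^{*})$, the candidate limit of the asynchronous sequence.

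Next I would build the contracting sets. For $r \ge \|x^{0} - x^{*}\|_{\infty}^{w}$ (large enough to bound the initial data), set
\[
S(q) := \{ x \in E : \|x - x^{*}\|_{\infty}^{w} \le \alpha^{q} r \}, \qquad q \in \mathbb{N}.
\]
Two facts are immediate and are exactly what the asynchronous machinery needs. First, because $\|.\|_{\infty}^{w}$ is a maximum-type norm, the ball $S(q)$ is in fact a Cartesian product $S(q) = S_{1}(q) \times \cdots \times S_{p}(q)$ with $S_{i}(q) = \{x_{i} : \|x_{i} - x_{i}^{*}\|_{(i)} \le \alpha^{q} r\, w_{i}\}$; this box structure is indispensable, since asynchronous updates act one block at a time. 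Second, the contraction hypothesis yields the shrinking property: if $x^{(1)}, \ldots, x^{(m)} \in S(q)$ then $\|\widetilde f(X) - x^{*}\|_{\infty}^{w} = \|\widetilde f(X) - \widetilde f(x^{*},\ldots,x^{*})\|_{\infty}^{w} \le \alpha^{q+1} r$, so $\widetilde f(X) \in S(q+1)$. Since $\alpha^{q} r \to 0$, we also have $\bigcap_{q} S(q) = \{x^{*}\}$.

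The heart of the argument is then the induction showing that for each $q$ there exists a time $K_{q}$ with $x^{k} \in S(q)$ for all $k \ge K_{q}$. The base case $K_{0} = 0$ follows from the choice of $r$ together with a short induction on $k$: an un-updated block stays put, while an updated block equals $\widetilde f_{i}$ of arguments in $S(0)$, hence lands in $S_{i}(1) \subseteq S_{i}(0)$. For the inductive step, assumption \eqref{eq:ai_ass2} provides a time $K'$ beyond which every delayed index $\tau_{j,s}^{i}(k)$ exceeds $K_{q}$ (finitely many triples, each delay tending to infinity), so after $K'$ every argument used in any update already lies in $S(q)$; by the shrinking property each such update places the relevant block into $S_{i}(q+1)$, and once there it cannot leave, as any later update again draws on $S(q)$-arguments and a non-update leaves it fixed. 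Finally, assumption \eqref{eq:ai_ass1} guarantees every block is updated at least once after $K'$, whence a finite $K_{q+1}$ past which all blocks sit in their $S_{i}(q+1)$, i.e. $x^{k} \in S(q+1)$.

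I expect the delicate bookkeeping of this inductive step to be the main obstacle: one must keep the two assumptions cleanly separated — unbounded delays \eqref{eq:ai_ass2} to flush out stale arguments so that every fresh update reads only from the current box, and infinitely-frequent updating \eqref{eq:ai_ass1} to drive every block into the next box in finite time — while invoking the box structure to pass from blockwise membership $x_{i}^{k} \in S_{i}(q+1)$ to the global statement $x^{k} \in S(q+1)$. Granting the induction, convergence is immediate: given $\varepsilon > 0$, pick $q$ with $\alpha^{q} r < \varepsilon$; then $\|x^{k} - x^{*}\|_{\infty}^{w} < \varepsilon$ for all $k \ge K_{q}$, so $x^{k} \to x^{*}$.
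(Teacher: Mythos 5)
Your proof is correct, but there is no proof in the paper to compare it against: Theorem~\ref{theo:et1982b} is imported from \cite{ElTarazi1982} and stated without proof, serving only as a tool for the later analysis. Your reconstruction follows the standard route and is complete at every point that matters. The diagonal-contraction step (taking all $x^{(s)}=x$, $y^{(s)}=y$ to get an $\alpha$-contraction $g$ on the Banach space $E$, hence a unique fixed point $x^{*}$ of $x=\widetilde f(x,\ldots,x)$) is a genuine necessity that the paper glosses over by simply positing $x^{*}$. Notably, your sets $S(q)$ are exactly the sets \eqref{eq:nsets} that the paper itself records when explaining how Theorem~\ref{theo:et1982b} with $m=1$ embeds into Bertsekas's nested-sets framework (Theorem~\ref{theo:b1983}); your argument extends that embedding to arbitrary $m$, which is precisely what is needed since Theorem~\ref{theo:b1983} is stated only for $m=1$ and the paper's application requires $m=2$. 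Your handling of the two assumptions is also right where the delicate point lies: finitely many delay functions $\tau_{j,s}^{i}$ together with \eqref{eq:ai_ass2} yield a uniform flushing time $K'$ after which every argument read by an update lies in the current box; the Cartesian-product structure of $S(q)$ (a consequence of the weighted maximum norm) converts blockwise membership into global membership; updates past $K'$ land in $S_{i}(q+1)$ and persist under both updates and non-updates; and \eqref{eq:ai_ass1} forces every block to be refreshed at least once, producing a finite $K_{q+1}$. The same two-phase pattern --- the error level advances only once all inputs are sufficiently fresh --- is what the paper re-derives in its own proof of Theorem~\ref{theo:apr_error} via the counter $\sigma(k)$, so your proof is fully consistent with the machinery the paper builds on.
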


Bertsekas \cite{Bertsekas1983} then provided a more general contraction framework based on the identification of an implicit sequence of nested sets which converges toward the solution set $\{x^{*}\}$ (actually, this even applies for mappings admitting several fixed points):
\begin{theorem}[Bertsekas, 1983]
\label{theo:b1983}
The asynchronous iterative model \eqref{eq:gai}, with
\[
m = 1,
\]
is convergent if there exists a sequence $\{S^{(l)}\}_{l \in \mathbb{N}}$ of subsets of $E$, with
\[
x^{0} \in S^{(0)}, \qquad S^{(l)} = S^{(l)}_{1} \times \cdots \times S^{(l)}_{p}, \qquad S^{(l+1)} \subset S^{(l)}, \qquad \lim_{l \to +\infty} S^{(l)} = \{x^{*}\},
\]
and such that
\[
\forall x \in S^{(l)}, \quad \widetilde f(x) \in S^{(l+1)}.
\]
\end{theorem}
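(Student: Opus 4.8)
The plan is to establish the standard \emph{key lemma} underlying Bertsekas's framework: for every level $l \in \mathbb{N}$ there is a time $K_l$ such that the entire iterate lies in $S^{(l)}$ from then on, i.e. $x^k \in S^{(l)}$ for all $k \ge K_l$. Granting this, convergence is immediate from the hypothesis $\lim_{l \to \infty} S^{(l)} = \{x^*\}$: given any neighborhood $V$ of $x^*$, nestedness lets me pick $l$ with $S^{(l)} \subseteq V$, and then $x^k \in S^{(l)} \subseteq V$ for every $k \ge K_l$, so $x^k \to x^*$. All the substance therefore lies in the key lemma, which I would prove by induction on $l$, keeping in mind that with $m = 1$ the model \eqref{eq:gai} reduces to the single-mapping delayed form with lags $\tau_j^i(k) \le k$.

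For the base case $l = 0$, I would take $K_0 = 0$ and show $x^k \in S^{(0)}$ for all $k$ by strong induction on $k$, starting from the initial condition $x^0 \in S^{(0)}$. Assuming $x^{k'} \in S^{(0)}$ for all $k' \le k$, each argument $x_j^{\tau_j^i(k)}$ has index $\tau_j^i(k) \le k$ and hence lies in its factor $S_j^{(0)}$; here I would crucially invoke the \textbf{product (box) structure} $S^{(0)} = S_1^{(0)} \times \cdots \times S_p^{(0)}$ to conclude that the full argument vector, assembled from components taken at \emph{different} times, still belongs to $S^{(0)}$. The mapping condition then gives $\widetilde f_i(\cdot) \in S_i^{(1)} \subset S_i^{(0)}$ for updated indices $i \in P^{(k)}$, while unchanged components satisfy $x_i^{k+1} = x_i^k \in S_i^{(0)}$; hence $x^{k+1} \in S^{(0)}$.

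For the inductive step, assume $x^k \in S^{(l)}$ for all $k \ge K_l$ and seek $K_{l+1} \ge K_l$ with $x^k \in S^{(l+1)}$ for all $k \ge K_{l+1}$. First, applying the delay assumption \eqref{eq:ai_ass2} over the finitely many pairs $(i,j)$, I would fix a time $K_l'$ beyond which $\tau_j^i(k) \ge K_l$ for all $i,j$; then for $k \ge \max(K_l, K_l')$ each argument vector lies in $S^{(l)}$ (again by the box structure), so every relaxed component lands in $S_i^{(l+1)}$ via the mapping condition. Next, using the infinite-activation assumption \eqref{eq:ai_ass1}, each component $i$ is relaxed at some $k_i \ge \max(K_l, K_l')$, after which $x_i^{k_i+1} \in S_i^{(l+1)}$; I would then observe that this membership is \emph{persistent}, since for $k \ge \max(K_l, K_l')$ any later relaxation reproduces an element of $S_i^{(l+1)}$ and any non-relaxation leaves the component unchanged. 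Taking $K_{l+1} = 1 + \max_i k_i$ places all components in their $S^{(l+1)}$-factors simultaneously, closing the induction.

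The main obstacle, and the reason the product hypothesis on $S^{(l)}$ cannot be dropped, is precisely the reconciliation of the componentwise lags $\tau_j^i(k)$ with the set-membership bookkeeping: an asynchronous update of component $i$ reads each component $j$ at its own outdated time, so the only way to certify that the assembled argument lies in $S^{(l)}$ is that $S^{(l)}$ splits as a Cartesian product over the components. The two standing assumptions play complementary roles — \eqref{eq:ai_ass2} ensures stale data eventually originates from within $S^{(l)}$, while \eqref{eq:ai_ass1} ensures every component is eventually refreshed into $S^{(l+1)}$ — and the induction goes through only because both hold at once.
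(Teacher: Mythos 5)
The paper offers no proof of this theorem---it is quoted as a known result from Bertsekas \cite{Bertsekas1983}---and your argument is a correct rendition of the standard proof of that Asynchronous Convergence Theorem: the key lemma that $x^{k} \in S^{(l)}$ for all $k \ge K_{l}$, proved by induction on $l$, with the box structure certifying that argument vectors assembled from components read at different times stay in $S^{(l)}$, assumption \eqref{eq:ai_ass2} pushing all stale reads past $K_{l}$, and assumption \eqref{eq:ai_ass1} guaranteeing every component is eventually refreshed into its $S^{(l+1)}$-factor. Two minor points you could make explicit: the factorwise inclusion $S^{(l+1)}_{i} \subseteq S^{(l)}_{i}$ that you use follows from $S^{(l+1)} \subset S^{(l)}$ only once the factors are known to be nonempty (immediate here, since $x^{0} \in S^{(0)}$ and the mapping condition propagates nonemptiness), and your reading of $\lim_{l \to +\infty} S^{(l)} = \{x^{*}\}$ as eventual containment in every neighborhood of $x^{*}$ is the strongest of the possible interpretations of the paper's informal limit statement, though it is the one under which the conclusion follows exactly as you say.
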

As mentioned above, the sets sequence $\{S^{(l)}\}_{l \in \mathbb{N}}$ is only implicitly generated all along the successive updates of the components $x_{i}$, and, for more clarity, we do use another sequence variable, $l$, since several asynchronous iterations $k$ could be performed within a same set $S^{(l)}$ before entering the next one, $S^{(l+1)}$. The main difficulty in applying such a result to a specific mapping $\widetilde f$ is to be able to identify the sets $S^{(l)}$ and $S^{(l)}_{i}$. Bertsekas \& Tsitsiklis \cite{BertTsit1991} showed that this was more general than the contraction framework based on weighted maximum norms. When, for instance, Theorem \ref{theo:et1982b} is applicable, with $m = 1$, then we have (see \cite[section 6.3]{BertTsit1989} and \cite{BertTsit1991})
\begin{equation}
\label{eq:nsets}
S^{(l)} = \left\{x \in E \ | \ \|x - x^{*}\|_{\infty}^{w} \le \alpha^{l} \|x^{0} - x^{*}\|_{\infty}^{w}\right\}, \qquad l \in \mathbb{N},
\end{equation}
and
\[
S^{(l)}_{i} = \left\{\dot{x} \in E_{i} \ | \ \frac{\|\dot{x} - x_{i}^{*}\|_{(i)}}{w_{i}} \le \alpha^{l} \|x^{0} - x^{*}\|_{\infty}^{w}\right\}, \qquad l \in \mathbb{N}, \quad i \in \left\{1, \ldots, p\right\}.
\]

\section{Parareal asynchronous iterations}
\label{sec:apr}

\subsection{Computational model}

Let us consider a time-dependent PDE,
\begin{equation}
\label{eq:pde}
\frac{\partial u}{\partial t} (s, t) = g(u, s, t), \qquad t \in [0, T] \subset \mathbb{R}, \quad s \in \Omega \subset \mathbb{R}^{3},
\end{equation}
with some given boundary condition,
\[
h(u, s, t) = 0, \qquad t \in [0, T], \quad s \in \partial\Omega,
\]
and initial values,
\[
u^{(0)}(s) := u(s, 0), \qquad s \in \Omega.
\]
Let the time domain $[0, T]$ be given by
\[
[0, T] = [T_{0}, T_{1}] \cup [T_{1}, T_{2}] \cup \cdots \cup [T_{p-1}, T_{p}], \qquad p \in \mathbb{N},
\]
and assume initial values $\lambda_{i-1}(s)$, with $s \in \Omega$, on each subdomain $[T_{i-1}, T_{i}]$, with $i \in \{1, \ldots, p\}$. Independent subproblems can then be formulated by
\begin{equation}
\label{eq:sub_pde}
\left \{
\begin{array}{lllll}
\dfrac{\partial u_{i}}{\partial t} (s, t) & = & g(u_{i}, s, t), & t \in [T_{i-1}, T_{i}], & s \in \Omega,\\
h(u_{i}, s, t) & = & 0, & t \in [T_{i-1}, T_{i}], & s \in \partial\Omega,\\
u_{i}(s, T_{i-1}) & = & \lambda_{i-1}(s), & s \in \Omega, &
\end{array}
\right.
\end{equation}
which is equivalent to the global problem \eqref{eq:pde} if and only if
\begin{equation}
\label{eq:itf_cond}
\left \{
\begin{array}{llll}
\lambda_{0}(s) & = & u^{(0)}(s), &\\
\lambda_{i-1}(s) & = & u_{i-1}(s, T_{i-1}), & \forall i \in \{2, \ldots, p\}.
\end{array}
\right.
\end{equation}

Now let, for $i \in \{1, \ldots, p\}$,
\[
u_{i}^{(t)}(s) := u_{i}(s, t), \qquad s \in \Omega, \quad t \in [T_{i-1}, T_{i}],
\]
and let $F$ be a function which, to initial values $\lambda_{i-1}(s)$, associates a sufficiently accurate solution of the subproblem \eqref{eq:sub_pde}, at time $T_{i}$, i.e.,
\[
F(\lambda_{i-1}) \cong u_{i}^{(T_{i})}.
\]
Similarly, let $G$ be a function which, to initial values $\lambda_{i-1}(s)$, associates a cheap approximation of the solution of the subproblem \eqref{eq:sub_pde}, at time $T_{i}$, i.e.,
\[
G(\lambda_{i-1}) \approx u_{i}^{(T_{i})}.
\]
From initial guess
\[
\left \{
\begin{array}{lcll}
\lambda_{0}^{0} & = & u^{(0)}, &\\
\lambda_{i}^{0} & = & G(\lambda_{i-1}^{0}), & \forall i \in \{1, \ldots, p\},
\end{array}
\right.
\]
the Parareal iterative scheme \cite{LionsEtAl2001, FarChan2003} defines sequences $\{\lambda_{i}^{k}\}_{k \in \mathbb{N}}$ such that
\begin{equation}
\label{eq:pr}
\left \{
\begin{array}{lcll}
\lambda_{0}^{k+1} & = & \lambda_{0}^{k}, &\\
\lambda_{i}^{k+1} & = & G(\lambda_{i-1}^{k+1}) + F(\lambda_{i-1}^{k}) - G(\lambda_{i-1}^{k}), & \forall i \in \{1, \ldots, p\},
\end{array}
\right.
\end{equation}
which should converge to a collection $\{\lambda_{i}^{*}\}$ satisfying the interface conditions \eqref{eq:itf_cond}. This algorithm is thus expected to provide the solution $u^{(T)}$ of \eqref{eq:pde} at time $T$, without applying the whole sequence
{
\[
u^{(T)} = F \circ F \circ \cdots \circ F(u^{(0)}).
\]
}

To easily assess such a computational model within the previous framework of general asynchronous iterations, let us consider linear operators
\[
G(\lambda_{i}) = \mathcal G \lambda_{i} + \delta, \quad F(\lambda_{i}) = \mathcal F \lambda_{i} + \zeta, \qquad i \in \{0, 1, \ldots, p\},
\]
and a vector
\[
\lambda = (\lambda_{0}, \lambda_{1}, \ldots, \lambda_{p}).
\]
Then, \eqref{eq:pr} simplifies to
\[
\left \{
\begin{array}{lcll}
\lambda_{0}^{k+1} & = & \lambda_{0}^{k}, &\\
\lambda_{i}^{k+1} & = & \mathcal G \lambda_{i-1}^{k+1} + (\mathcal F - \mathcal G) \lambda_{i-1}^{k} + \zeta, & \forall i \in \{1, \ldots, p\},
\end{array}
\right.
\]
which corresponds to synchronous relaxations \eqref{eq:si} of a problem
\[
A \lambda = b,
\]
with
\[
A =
\begin{bmatrix}
I & 0 & 0 & \cdots & 0\\
-\mathcal F & I & 0 & \cdots & 0\\
0 & -\mathcal F & I & \ddots & \vdots\\
\vdots & \ddots & \ddots & \ddots & 0\\
0 & \cdots & 0 & -\mathcal F & I
\end{bmatrix},
\quad
M =
\begin{bmatrix}
I & 0 & 0 & \cdots & 0\\
-\mathcal G & I & 0 & \cdots & 0\\
0 & -\mathcal G & I & \ddots & \vdots\\
\vdots & \ddots & \ddots & \ddots & 0\\
0 & \cdots & 0 & -\mathcal G & I
\end{bmatrix},
\quad
b =
\begin{bmatrix}
\lambda_{0}^{*}\\
\zeta\\
\zeta\\
\vdots\\
\zeta
\end{bmatrix},
\]
recalling that the iterations mapping $f$ is given by
\[
f(\lambda) = (I - M^{-1}A) \lambda + M^{-1} b.
\]
\begin{remark}
This corresponds to a preconditioned Richardson iterative scheme, as also formulated, e.g., in \cite{NielHest2016}.
\end{remark}
Nonetheless, a straightforward application of asynchronous relaxations \eqref{eq:ai} would yield a computation of the form
\[
\lambda_{i}^{k+1} = \left\{
\begin{array}{ll}
G(\lambda_{i-1}^{k+1}) + F(\lambda_{i-1}^{\tau_{i-1}^{i}(k)}) - G(\lambda_{i-1}^{\tau_{i-1}^{i}(k)}), & \forall i \in P^{(k)},\\
\lambda_{i}^{k}, & \forall i \notin P^{(k)},
\end{array}
\right.
\]
which is not of great interest, due to the synchronization on $G(\lambda_{i-1}^{k+1})$. We would therefore rather be interested in seeing what happens with a computation, at least, of the form
\begin{equation}
\label{eq:apr}
\lambda_{i}^{k+1} = \left\{
\begin{array}{ll}
G(\lambda_{i-1}^{\tau_{i-1,1}^{i}(k)}) + F(\lambda_{i-1}^{\tau_{i-1,2}^{i}(k)}) - G(\lambda_{i-1}^{\tau_{i-1,2}^{i}(k)}), & \forall i \in P^{(k)},\\
\lambda_{i}^{k}, & \forall i \notin P^{(k)},
\end{array}
\right.
\end{equation}
which corresponds to the more general asynchronous iterations \eqref{eq:gai}, with $m = 2$, and where the iterations mapping $\widetilde f$ is given by
\begin{equation}
\label{eq:apr_f}
\left\{
\begin{array}{lcll}
\widetilde f_{0}(\lambda^{(1)}, \lambda^{(2)}) & = & \lambda_{0}^{*}, &\\
\widetilde f_{i}(\lambda^{(1)}, \lambda^{(2)}) & = & G(\lambda_{i-1}^{(1)}) + F(\lambda_{i-1}^{(2)}) - G(\lambda_{i-1}^{(2)}), & \forall i \in \{1, \ldots, p\}.
\end{array}
\right.
\end{equation}
It is easy to verify that
\[
\lambda = f(\lambda) \quad \iff \quad \lambda = \widetilde f(\lambda, \lambda),
\]
which ensures an equivalent fixed-point problem. We recall that the asynchronous iterations variable $k$ grows faster than the synchronous iterations one, therefore, the delays expressed in \eqref{eq:apr} do not have to be seen as implying a retarded iterative procedure.

Note, finally, that the mapping $\widetilde f$ is explicitly expressed without linearity assumption on $F$ and $G$. In the following convergence analysis of the Parareal asynchronous iterations \eqref{eq:apr}, such an assumption will be sometimes made only to keep the development simple and to use same notations as with the linear synchronous formulation. Furthermore, to fully exploit the domain decomposition point of view, one will notice that, in the asynchronous framework, the extension of these results to a general formulation using $F_{i}$ and $G_{i}$ is quite straightforward.

\subsection{Convergence conditions}

Let $\|.\|_{\infty}$ denote a maximum norm \eqref{eq:wmn} where $w$ is a vector of $1$. We first restate preliminary results in our linear synchronous framework.
\begin{theorem}
\label{theo:pr_error}
The Parareal synchronous iterative model \eqref{eq:pr} defines a sequence $\{\lambda^{k}\}_{k \in \mathbb{N}}$ such that
\[
\|\lambda^{k} - \lambda^{*}\|_{\infty} \le \alpha^{k} \|\lambda^{0} - \lambda^{*}\|_{\infty},
\]
with
\[
\alpha = \frac{1 - \theta^{p}}{1 - \theta} \|\mathcal F - \mathcal G\|, \qquad \theta \ne 1, \quad \theta \ge \|\mathcal G\|,
\]
and $\|.\|$ denoting any given norm.
\end{theorem}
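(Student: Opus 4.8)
The plan is to work entirely within the stated linear framework and to track the error vector $e^{k} := \lambda^{k} - \lambda^{*}$, componentwise $e_{i}^{k} = \lambda_{i}^{k} - \lambda_{i}^{*}$. Since the fixed point satisfies $\lambda_{i}^{*} = \mathcal G \lambda_{i-1}^{*} + (\mathcal F - \mathcal G)\lambda_{i-1}^{*} + \zeta$, subtracting this from \eqref{eq:pr} in its linear form cancels the constant $\zeta$ and yields the clean error recursion $e_{i}^{k+1} = \mathcal G\, e_{i-1}^{k+1} + (\mathcal F - \mathcal G)\, e_{i-1}^{k}$ for $i \in \{1,\dots,p\}$, together with $e_{0}^{k+1} = e_{0}^{k}$. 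First I would record that the zeroth component carries no error: because $\lambda_{0}^{0} = u^{(0)} = \lambda_{0}^{*}$ and $\lambda_{0}^{k+1}=\lambda_{0}^{k}$, we get $e_{0}^{k} = 0$ for all $k$. This boundary fact is what makes the recursion solvable in closed form.

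Next I would unroll the recursion in the time-subinterval index $i$, at fixed outer level $k+1$. Iterating $e_{i}^{k+1} = \mathcal G\, e_{i-1}^{k+1} + (\mathcal F - \mathcal G)\, e_{i-1}^{k}$ down to $e_{0}^{k+1}$ gives $e_{i}^{k+1} = \mathcal G^{i} e_{0}^{k+1} + \sum_{j=0}^{i-1} \mathcal G^{j} (\mathcal F - \mathcal G)\, e_{i-1-j}^{k}$, which I would confirm by a one-line induction on $i$. Using $e_{0}^{k+1} = 0$, the leading term drops, leaving a pure convolution of the previous-level errors $e_{0}^{k},\dots,e_{i-1}^{k}$ against the operator weights $\mathcal G^{j}(\mathcal F - \mathcal G)$.

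Taking the norm, applying submultiplicativity, and pulling out $\max_{l} \|e_{l}^{k}\| = \|e^{k}\|_{\infty}$ then reduces everything to controlling the scalar geometric sum $\sum_{j=0}^{i-1} \|\mathcal G\|^{j}$. Here the hypotheses on $\theta$ do the work: from $\theta \ge \|\mathcal G\| \ge 0$ one has $\|\mathcal G\|^{j} \le \theta^{j}$ termwise, and $\theta \ne 1$ lets me write the majorant in closed form as $\sum_{j=0}^{i-1}\theta^{j} = (1-\theta^{i})/(1-\theta)$; this sidesteps the degenerate case $\|\mathcal G\|=1$, where the direct geometric formula would be undefined. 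Since each $\theta^{j}\ge 0$, the partial sums are non-decreasing in $i$, so for every $i \le p$ the factor is bounded by $(1-\theta^{p})/(1-\theta)$. Combining these estimates gives $\|e_{i}^{k+1}\| \le \alpha \|e^{k}\|_{\infty}$ with $\alpha$ as stated, uniformly in $i$ (the case $i=0$ being trivial since $e_{0}^{k+1}=0$).

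Finally I would take the maximum over $i$ to obtain the one-step contraction $\|e^{k+1}\|_{\infty} \le \alpha \|e^{k}\|_{\infty}$ and iterate it $k$ times to reach the claimed bound $\|\lambda^{k}-\lambda^{*}\|_{\infty} \le \alpha^{k}\|\lambda^{0}-\lambda^{*}\|_{\infty}$. I expect no deep obstacle, as the argument is essentially a telescoped geometric estimate. The only point requiring genuine care is the uniformization over $i$ together with the correct handling of the $\theta$-majorant in both regimes $\theta<1$ and $\theta>1$ (and at the boundary $\|\mathcal G\|=1$), which is precisely why the statement phrases the bound through an auxiliary $\theta \ne 1$ rather than through $\|\mathcal G\|$ directly.
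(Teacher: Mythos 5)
Your proof is correct and is in substance the paper's own argument made explicit: the paper merely cites Gander--Vandewalle (Theorem 4.5) or invokes the linear fixed-point bound $\|\lambda^{k}-\lambda^{*}\|_{\infty} \le (\|I - M^{-1}A\|_{\infty})^{k}\,\|\lambda^{0}-\lambda^{*}\|_{\infty}$, and your unrolling of the error recursion $e_{i}^{k+1} = \mathcal G\, e_{i-1}^{k+1} + (\mathcal F - \mathcal G)\, e_{i-1}^{k}$ in the index $i$ is precisely the componentwise computation behind that bound, showing the block rows of $I - M^{-1}A = M^{-1}(M-A)$ satisfy $\sum_{j=0}^{i-1}\|\mathcal G\|^{j}\|\mathcal F - \mathcal G\| \le \alpha$. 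Your care with the boundary component (using $\lambda_{0}^{0} = u^{(0)} = \lambda_{0}^{*}$ so that $e_{0}^{k} = 0$, which kills the $\mathcal G^{i} e_{0}^{k+1}$ term) and with the $\theta$-majorant in both regimes $\theta < 1$ and $\theta > 1$ is sound, so you obtain exactly the one-step contraction $\|e^{k+1}\|_{\infty} \le \alpha \|e^{k}\|_{\infty}$ that the paper's one-line proof presupposes.
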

\begin{proof}
See Theorem 4.5 in \cite{GanVan2007} for a possible proof, or directly consider the fact that linear fixed-point formulation implies
\[
\|\lambda^{k} - \lambda^{*}\|_{\infty} \le (\|I - M^{-1} A\|_{\infty})^{k} \|\lambda^{0} - \lambda^{*}\|_{\infty}.
\]
\end{proof}
\begin{remark}
\[
\lim_{p \to +\infty} \frac{1 - \theta^{p}}{1 - \theta} \|\mathcal F-\mathcal G\| = \left\{
\begin{array}{ll}
+\infty, & \theta > 1,\\
\dfrac{1}{1 - \theta} \|\mathcal F-\mathcal G\|, & \theta < 1.
\end{array}
\right.
\]
One is therefore interested by the case $\theta < 1$ which ensures a stable iterative procedure.
\end{remark}
\begin{corollary}
\label{cor:pr_conv}
The Parareal synchronous iterative model \eqref{eq:pr} is convergent if
\[
\|\mathcal G\| < 1,
\]
and
\[
\|\mathcal G\| + \|\mathcal F-\mathcal G\| < 1 + \|\mathcal G\|^{p} \|\mathcal F-\mathcal G\|.
\]
\end{corollary}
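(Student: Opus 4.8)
The plan is to read the corollary off directly from the error estimate of Theorem \ref{theo:pr_error}, which already guarantees convergence as soon as the contraction factor $\alpha$ satisfies $\alpha < 1$. Since in that theorem $\theta$ is a free parameter constrained only by $\theta \ne 1$ and $\theta \ge \|\mathcal G\|$, the first step is to pick the most favorable admissible value. Writing the geometric sum
\[
\frac{1 - \theta^{p}}{1 - \theta} = 1 + \theta + \cdots + \theta^{p-1},
\]
one sees that this quantity is strictly increasing in $\theta$ on $(0, +\infty)$, so the smallest possible factor is attained at the lower endpoint $\theta = \|\mathcal G\|$. The hypothesis $\|\mathcal G\| < 1$ is exactly what makes this choice legitimate: it is both $\ge \|\mathcal G\|$ (with equality) and distinct from $1$, and it additionally secures $1 - \theta = 1 - \|\mathcal G\| > 0$, which will be needed to clear denominators.

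With $\theta = \|\mathcal G\|$, Theorem \ref{theo:pr_error} gives
\[
\alpha = \frac{1 - \|\mathcal G\|^{p}}{1 - \|\mathcal G\|}\,\|\mathcal F - \mathcal G\|,
\]
and since the theorem's bound reads $\|\lambda^{k} - \lambda^{*}\|_{\infty} \le \alpha^{k}\|\lambda^{0} - \lambda^{*}\|_{\infty}$, convergence holds precisely when $\alpha < 1$. The only remaining task is to check that the second hypothesis of the corollary is nothing but the inequality $\alpha < 1$ rewritten.

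To carry that out I would multiply $\alpha < 1$ by the positive quantity $1 - \|\mathcal G\|$, obtaining $(1 - \|\mathcal G\|^{p})\|\mathcal F - \mathcal G\| < 1 - \|\mathcal G\|$, and then rearrange to
\[
\|\mathcal G\| + \|\mathcal F - \mathcal G\| < 1 + \|\mathcal G\|^{p}\|\mathcal F - \mathcal G\|,
\]
which is the stated condition. Because every manipulation is reversible, this inequality is equivalent to $\alpha < 1$, so the two hypotheses together yield convergence. The argument is essentially algebraic; the only point demanding genuine care, rather than routine computation, is the admissibility of the endpoint choice $\theta = \|\mathcal G\|$ and the positivity of $1 - \|\mathcal G\|$ used when clearing the denominator. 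Both are furnished by the first hypothesis $\|\mathcal G\| < 1$, so I do not expect any real obstacle beyond this bookkeeping.
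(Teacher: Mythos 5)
Your proof is correct and follows essentially the same route as the paper's: both invoke Theorem \ref{theo:pr_error}, take $\theta = \|\mathcal G\|$ (legitimate since $\|\mathcal G\| < 1$), clear the positive denominator $1 - \|\mathcal G\|$, and rearrange to the stated inequality. Your added remarks on the monotonicity of the geometric sum and the reversibility of the algebra are harmless extras; the paper only asserts sufficiency, which is all the corollary needs.
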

\begin{proof}
It directly follows from Theorem \ref{theo:pr_error} that convergence is guaranteed if
\[
\frac{1 - \theta^{p}}{1 - \theta} \|\mathcal F - \mathcal G\| < 1.
\]
If $\theta < 1$, this is implied by
\[
(1 - \theta^{p}) \|\mathcal F - \mathcal G\| < 1 - \theta.
\]
If then $\|\mathcal G\| < 1$, we can take $\theta = \|\mathcal G\|$, which leads to the result.
\end{proof}
\begin{remark}
The Parareal synchronous iterative model \eqref{eq:pr} is thus asymptotically convergent ($p \to +\infty$) if
\[
\|\mathcal G\| + \|\mathcal F-\mathcal G\| < 1.
\]
\end{remark}

We need here some words about the terminology. It is well established (see \cite{FarChan2003}) that Parareal iterations terminate in finite time $k = p$, due to the sequential propagation of the accurate solution, from $[T_{0}, T_{1}]$ to $[T_{p-1}, T_{p}]$. By ``convergence'', we therefore mean the fact that the iterative procedure actually allows one to reach a given precision by decreasing the maximum error over the whole time domain $[T_{0}, T_{p}]$, which thus suggests the possibility to converge before getting sequential. To be rigorous though, ``convergence'', here, is a shortcut word for ``convergence with respect to the maximum norm''.

In the following, we would now like to extend such results to the asynchronous framework.
\begin{theorem}
\label{theo:apr_error}
The Parareal asynchronous iterative model \eqref{eq:apr} defines a sequence $\{\lambda^{k}\}_{k \in \mathbb{N}}$ such that
\[
\|\lambda^{k} - \lambda^{*}\|_{\infty} \le \widetilde\alpha^{\sigma(k)} \|\lambda^{0} - \lambda^{*}\|_{\infty},
\]
with
\[
\widetilde\alpha = \|\mathcal G\| + \|\mathcal F-\mathcal G\|,
\]
where $\sigma$ is an integer-valued function on $\mathbb{N}$ satisfying
\[
\underset{k \to +\infty}{\lim} \sigma(k) = +\infty.
\]
\end{theorem}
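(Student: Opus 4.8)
The plan is to recast the asynchronous model \eqref{eq:apr} into the general contraction framework of Section~\ref{sec:async} and to extract a quantitative error bound from the associated nested-set construction. First I would verify that the mapping $\widetilde f$ of \eqref{eq:apr_f} satisfies the El~Tarazi condition of Theorem~\ref{theo:et1982b}, with uniform weights $w = 1$ (so that $\|\cdot\|_\infty^w = \|\cdot\|_\infty$) and contraction constant $\widetilde\alpha = \|\mathcal G\| + \|\mathcal F - \mathcal G\|$. For $i \ge 1$ the linearity (or, in the nonlinear case, the Lipschitz constants $\|\mathcal G\|$ of $G$ and $\|\mathcal F - \mathcal G\|$ of $F - G$) gives $\widetilde f_i(X) - \widetilde f_i(Y) = \mathcal G\,(x^{(1)}_{i-1} - y^{(1)}_{i-1}) + (\mathcal F - \mathcal G)(x^{(2)}_{i-1} - y^{(2)}_{i-1})$, whence $\|\widetilde f_i(X) - \widetilde f_i(Y)\|_{(i)} \le \|\mathcal G\|\,\|x^{(1)} - y^{(1)}\|_\infty + \|\mathcal F - \mathcal G\|\,\|x^{(2)} - y^{(2)}\|_\infty \le \widetilde\alpha \max\{\|x^{(1)} - y^{(1)}\|_\infty, \|x^{(2)} - y^{(2)}\|_\infty\}$; since $\widetilde f_0$ is constant it contributes nothing, and taking the maximum over $i$ yields the required inequality.

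Next, to obtain the bound $\widetilde\alpha^{\sigma(k)}$ rather than mere convergence, I would mimic the nested-set construction \eqref{eq:nsets} (now with $m = 2$). Define $S^{(l)} = \{\lambda : \|\lambda - \lambda^*\|_\infty \le \widetilde\alpha^l \|\lambda^0 - \lambda^*\|_\infty\}$, which is the Cartesian product of the componentwise balls $S^{(l)}_i$. Specialising the contraction inequality to $Y = (\lambda^*, \lambda^*)$, so that $\widetilde f(Y) = \lambda^*$, exhibits the key descent property: if the two delayed iterates feeding a component update both lie in $S^{(l)}$, then the updated component lands in $S^{(l+1)}_i$.

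The core of the argument is then an induction producing a nondecreasing sequence of \emph{finite} times $0 = k_0 < k_1 < k_2 < \cdots$ with $\lambda^k \in S^{(l)}$ for every $k \ge k_l$. The base case is immediate. For the step, assuming $\lambda^k \in S^{(l)}$ for all $k \ge k_l$, I would use the delay condition \eqref{eq:ai_ass2} to pick $K$ beyond which every delay satisfies $\tau^i_{\cdot,\cdot}(k) \ge k_l$; then any component relaxed after $K$ is computed from iterates in $S^{(l)}$ and therefore enters $S^{(l+1)}_i$. The update condition \eqref{eq:ai_ass1} guarantees that each component $i$ is relaxed at least once after $K$, and the \emph{enter-and-stay} property — later relaxations keep the component in $S^{(l+1)}_i$ while intermediate non-relaxations freeze it there, using $S^{(l+1)} \subset S^{(l)}$ in the relevant regime $\widetilde\alpha < 1$ — ensures it remains there afterwards. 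Taking $k_{l+1}$ past the last such first-entry time closes the induction.

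Finally I would set $\sigma(k) = \max\{l \in \mathbb{N} : k_l \le k\}$; finiteness of each $k_l$ forces $\sigma(k) \to +\infty$, while $\lambda^k \in S^{(\sigma(k))}$ is exactly the claimed inequality. The main obstacle is the inductive step of the third paragraph: it is precisely here that both asynchronous hypotheses must be combined to guarantee that, despite arbitrary delays and update orders, \emph{all} components simultaneously and permanently descend one level. This asymptotic-regularity argument is the heart of asynchronous convergence, and it is where the block lower-bidiagonal propagation structure of Parareal (component $0$ being fixed at $\lambda_0^*$, and each $\widetilde f_i$ depending only on component $i-1$) makes the construction transparent.
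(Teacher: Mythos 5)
Your proof is correct, and its first half coincides with the paper's: the componentwise estimate $\|\widetilde f_i(X)-\widetilde f_i(Y)\|\le\widetilde\alpha\max\{\|x^{(1)}-y^{(1)}\|_\infty,\|x^{(2)}-y^{(2)}\|_\infty\}$ is exactly the chain of inequalities that opens the paper's proof (and which the paper itself points out, in the remark following Corollary~\ref{cor:apr_conv}, as matching Theorem~\ref{theo:et1982b}). Where you genuinely diverge is in how $\sigma$ is produced. The paper defines $\sigma$ \emph{implicitly, iteration by iteration}, through the recursion \eqref{eq:apr_error_2}: at each step $\sigma(k+1)$ is the exponent realizing the maximum over delayed exponents $\sigma(\tau^i_{i-1,\cdot}(k))+1$ and the frozen term $\sigma(k)$; divergence of $\sigma$ is then obtained by replacing each frozen component's bound with that of its last actual update (via \eqref{eq:ai_ass1}) so that every branch of the recursion carries a ``$+1$'' from some earlier time, and unrolling $\sigma(k+1)=\sigma(k-d_1)+1=\sigma(k-d_2)+2=\cdots$ together with \eqref{eq:ai_ass2}. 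You instead run the classical Bertsekas-style \emph{epoch} argument: construct finite times $k_0<k_1<\cdots$ with $\lambda^k\in S^{(l)}$ for all $k\ge k_l$, using \eqref{eq:ai_ass2} to push all delays past $k_l$ and \eqref{eq:ai_ass1} for the enter-and-stay step, then set $\sigma(k)=\max\{l\ |\ k_l\le k\}$. Both routes invoke the two asynchronous assumptions at the same logical juncture, but yours yields a monotone $\sigma$ through the standard asynchronous convergence theorem machinery and is arguably cleaner and easier to audit, while the paper's recursion tracks the exponent attached to the \emph{actual} delay pattern at every single iteration, which is the finer bookkeeping its later speedup discussion leans on. One small remark: your appeal to the nesting $S^{(l+1)}\subset S^{(l)}$ (hence to $\widetilde\alpha<1$) in the enter-and-stay step is not actually needed — after your time $K$, every relaxation of component $i$ uses inputs from times $\ge k_l$, which lie in $S^{(l)}$ by the level-$l$ hypothesis alone, so each such relaxation lands in $S^{(l+1)}_i$ regardless of whether the sets are nested; this matters only cosmetically, since the theorem is stated without assuming $\widetilde\alpha<1$ and the paper's own recursion likewise makes sense for any $\widetilde\alpha$.
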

\begin{proof}
According to \eqref{eq:apr}, we have, $\forall i \in P^{(k)}$,
\[
\begin{aligned}
\lambda_{i}^{k+1} & = \mathcal G \lambda_{i-1}^{\tau_{i-1,1}^{i}(k)} + (\mathcal F - \mathcal G) \lambda_{i-1}^{\tau_{i-1,2}^{i}(k)} + \zeta,\\
\lambda_{i}^{k+1} - \lambda_{i}^{*} & = \mathcal G (\lambda_{i-1}^{\tau_{i-1,1}^{i}(k)} - \lambda_{i-1}^{*}) + (\mathcal F - \mathcal G) (\lambda_{i-1}^{\tau_{i-1,2}^{i}(k)} - \lambda_{i-1}^{*}),\\
\|\lambda_{i}^{k+1} - \lambda_{i}^{*}\| & \le \|\mathcal G\| \|\lambda_{i-1}^{\tau_{i-1,1}^{i}(k)} - \lambda_{i-1}^{*}\| + \|\mathcal F - \mathcal G\| \|\lambda_{i-1}^{\tau_{i-1,2}^{i}(k)} - \lambda_{i-1}^{*}\|,\\
\|\lambda_{i}^{k+1} - \lambda_{i}^{*}\| & \le (\|\mathcal G\| + \|\mathcal F - \mathcal G\|) \max \{\|\lambda_{i-1}^{\tau_{i-1,1}^{i}(k)} - \lambda_{i-1}^{*}\|, \|\lambda_{i-1}^{\tau_{i-1,2}^{i}(k)} - \lambda_{i-1}^{*}\|\},
\end{aligned}
\]
which gives
\[
\left \{
\begin{array}{lcll}
\|\lambda_{i}^{k+1} - \lambda_{i}^{*}\| & \le & \widetilde \alpha \max \{\|\lambda_{i-1}^{\tau_{i-1,1}^{i}(k)} - \lambda_{i-1}^{*}\|, \|\lambda_{i-1}^{\tau_{i-1,2}^{i}(k)} - \lambda_{i-1}^{*}\|\}, & \forall i \in P^{(k)},\\
\|\lambda_{i}^{k+1} - \lambda_{i}^{*}\| & = & \|\lambda_{i}^{k} - \lambda_{i}^{*}\|, & \forall i \notin P^{(k)}.
\end{array}
\right.
\]
It follows that, $\forall i \in \{1, \ldots, p\}$,
\[
\|\lambda_{i}^{k+1} - \lambda_{i}^{*}\| \le \max \left\{
\begin{array}{l}
\widetilde \alpha \max \{\|\lambda_{i-1}^{\tau_{i-1,1}^{i}(k)} - \lambda_{i-1}^{*}\|, \|\lambda_{i-1}^{\tau_{i-1,2}^{i}(k)} - \lambda_{i-1}^{*}\|\},\\
\|\lambda_{i}^{k} - \lambda_{i}^{*}\|
\end{array}
\right\},
\]
and then,
\begin{equation}
\label{eq:apr_error}
\|\lambda^{k+1} - \lambda^{*}\|_{\infty} \le \max_{i \in \{1, \ldots, p\}} \left\{
\begin{array}{l}
\widetilde \alpha \|\lambda_{i-1}^{\tau_{i-1,1}^{i}(k)} - \lambda_{i-1}^{*}\|,\\
\widetilde \alpha \|\lambda_{i-1}^{\tau_{i-1,2}^{i}(k)} - \lambda_{i-1}^{*}\|,\\
\|\lambda_{i}^{k} - \lambda_{i}^{*}\|
\end{array}
\right\}.
\end{equation}

Let us define sets
\[
S^{(k)} = \{\lambda \ | \ \|\lambda - \lambda^{*}\|_{\infty} \le \widetilde\alpha^{\sigma(k)} \|\lambda^{0} - \lambda^{*}\|_{\infty}\},
\]
with $\sigma$ being a function on $\mathbb{N}$ such that
\[
\sigma(0) = 0.
\]
We obviously verify
\[
\lambda^{0} \in S^{(0)}.
\]
Assume then that
\[
\forall l \le k, \quad \lambda^{l} \in S^{(l)}.
\]
Therefore, from \eqref{eq:apr_error}, it follows that
\[
\|\lambda^{k+1} - \lambda^{*}\|_{\infty} \le \max_{i \in \{1, \ldots, p\}} \left\{
\begin{array}{l}
\widetilde \alpha \widetilde\alpha^{\sigma(\tau_{i-1,1}^{i}(k))} \|\lambda^{0} - \lambda^{*}\|_{\infty},\\
\widetilde \alpha \widetilde\alpha^{\sigma(\tau_{i-1,2}^{i}(k))} \|\lambda^{0} - \lambda^{*}\|_{\infty},\\
\widetilde\alpha^{\sigma(k)} \|\lambda^{0} - \lambda^{*}\|_{\infty}
\end{array}
\right\},
\]
and by taking $\sigma(k+1)$ such that
\begin{equation}
\label{eq:apr_error_2}
\widetilde\alpha^{\sigma(k+1)} = \max_{i \in \{1, \ldots, p\}} \left\{\widetilde\alpha^{\sigma(\tau_{i-1,1}^{i}(k))+1}, \widetilde\alpha^{\sigma(\tau_{i-1,2}^{i}(k))+1}, \widetilde\alpha^{\sigma(k)}\right\},
\end{equation}
we do have
\[
\|\lambda^{k+1} - \lambda^{*}\|_{\infty} \le \widetilde \alpha^{\sigma(k+1)} \|\lambda^{0} - \lambda^{*}\|_{\infty},
\]
which confirms that
\[
\forall k \in \mathbb{N}, \quad \lambda^{k} \in S^{(k)}.
\]

Finally, according to the classical assumption \eqref{eq:ai_ass1} on $P^{k}$, we eventually have, $\forall i \notin P^{(k)}$,
\[
\lambda_{i}^{k+1} = \lambda_{i}^{k} = \lambda_{i}^{k-1} = \cdots = \lambda_{i}^{k_{i}+1} = \mathcal G \lambda_{i-1}^{\tau_{i-1,1}^{i}(k_{i})} + (\mathcal F - \mathcal G) \lambda_{i-1}^{\tau_{i-1,2}^{i}(k_{i})} + \zeta, \quad i \in P^{(k_{i})},
\]
which implies in \eqref{eq:apr_error_2} that
\[
\sigma(k+1) \in \left\{
\begin{array}{l}
\sigma(\tau_{i-1,1}^{i}(k))+1, \ \ \sigma(\tau_{i-1,2}^{i}(k))+1,\\
\sigma(\tau_{i-1,1}^{i}(k_{i}))+1, \ \ \sigma(\tau_{i-1,2}^{i}(k_{i}))+1
\end{array}
\right\}_{i \in \{1, \ldots, p\}}.
\]
Therefore, with the other classical assumption \eqref{eq:ai_ass2} on $\tau_{i-1,1}^{i}$ and $\tau_{i-1,2}^{i}$, we also eventually have
\[
\sigma(k+1) = \sigma(k-d_{1})+1 = \sigma(k-d_{2})+1+1 = \cdots,
\]
with
\[
0 \le d_{1} < d_{2} < \cdots \le k,
\]
which ensures that
\[
\underset{k \to +\infty}{\lim} \sigma(k) = +\infty,
\]
and concludes the proof.
\end{proof}
With such an error bound, a convergence condition trivially follows.
\begin{corollary}
\label{cor:apr_conv}
The Parareal asynchronous iterative model \eqref{eq:apr} is convergent if
\begin{equation}
\label{eq:apr_conv}
\|\mathcal G\| + \|\mathcal F-\mathcal G\| < 1.
\end{equation}
\end{corollary}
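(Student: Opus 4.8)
The plan is to deduce the corollary directly from the error bound of Theorem \ref{theo:apr_error}, since essentially all of the substantive work — constructing the nested sets $S^{(k)}$, defining the auxiliary counter $\sigma$, and establishing $\sigma(k)\to+\infty$ from the standard assumptions \eqref{eq:ai_ass1} and \eqref{eq:ai_ass2} — has already been carried out in that theorem. First I would recall the bound it furnishes, namely $\|\lambda^{k}-\lambda^{*}\|_{\infty}\le\widetilde\alpha^{\sigma(k)}\|\lambda^{0}-\lambda^{*}\|_{\infty}$ with $\widetilde\alpha=\|\mathcal G\|+\|\mathcal F-\mathcal G\|$, and then argue that the hypothesis is exactly a decay condition on the right-hand side.

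The key observation is that the assumed inequality \eqref{eq:apr_conv} is precisely the statement $\widetilde\alpha<1$; moreover $\widetilde\alpha\ge0$ as a sum of norms, so $0\le\widetilde\alpha<1$. I would then pass to the limit: combining $\widetilde\alpha<1$ with $\lim_{k\to+\infty}\sigma(k)=+\infty$ gives $\lim_{k\to+\infty}\widetilde\alpha^{\sigma(k)}=0$ (with the degenerate case $\widetilde\alpha=0$ handled separately, where the bound already vanishes for every $k$ with $\sigma(k)\ge1$). A squeeze against the nonnegative quantity $\|\lambda^{k}-\lambda^{*}\|_{\infty}$ then forces $\|\lambda^{k}-\lambda^{*}\|_{\infty}\to0$, which is exactly convergence in the maximum norm — the notion of convergence fixed in the discussion preceding Theorem \ref{theo:apr_error}.

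There is no genuine obstacle here: the only point deserving a word of care is the legitimacy of writing $\widetilde\alpha^{\sigma(k)}\to0$, which relies on $\sigma$ being integer-valued and eventually larger than any prescribed bound, rather than merely tending to infinity along a subsequence. But this is precisely what the concluding argument of Theorem \ref{theo:apr_error} guarantees. I would therefore confine the proof to two lines, in keeping with the author's phrase that the result ``trivially follows,'' and avoid reproving anything already contained in the theorem.
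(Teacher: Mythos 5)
Your proposal is correct and follows exactly the paper's route: the corollary is obtained as an immediate consequence of the error bound in Theorem \ref{theo:apr_error}, since \eqref{eq:apr_conv} is precisely $\widetilde\alpha<1$ and $\sigma(k)\to+\infty$ forces $\widetilde\alpha^{\sigma(k)}\|\lambda^{0}-\lambda^{*}\|_{\infty}\to0$. Your added care about $\sigma$ being integer-valued and the degenerate case $\widetilde\alpha=0$ only makes explicit what the paper's one-line proof (``straightforward, from Theorem \ref{theo:apr_error}'') leaves implicit.
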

\begin{proof}
This is straightforward, from Theorem \ref{theo:apr_error}.
\end{proof}
\begin{remark}
The proof of Theorem \ref{theo:apr_error} exhibits
\[
\|\lambda_{i}^{k+1} - \lambda_{i}^{*}\| \le (\|\mathcal G\| + \|\mathcal F - \mathcal G\|) \max \{\|\lambda_{i-1}^{\tau_{i-1,1}^{i}(k)} - \lambda_{i-1}^{*}\|, \|\lambda_{i-1}^{\tau_{i-1,2}^{i}(k)} - \lambda_{i-1}^{*}\|\},
\]
which clearly indicates that Parareal asynchronous iterations \eqref{eq:apr} match the contraction framework of Theorem \ref{theo:et1982b}, which also implies convergence under the same sufficient condition \eqref{eq:apr_conv}.
\end{remark}

One of the advantages of asynchronous iterations is that the algorithm can be studied on the basis of only local behavior $\widetilde f_{i}$, independently of its input which, therefore, can be provided by whatever other local contracting mapping $\widetilde f_{i-1}$. This is an interesting feature which suggests much more flexibility for further developments. Indeed, for instance, in the proof of Theorem \ref{theo:apr_error}, one can notice that the whole analysis is based on only
\[
\lambda_{i}^{k+1} = \mathcal G \lambda_{i-1}^{\tau_{i-1,1}^{i}(k)} + (\mathcal F - \mathcal G) \lambda_{i-1}^{\tau_{i-1,2}^{i}(k)} + \zeta.
\]
Therefore, to fully account for the domain decomposition context, the resulting error bound can be directly restated using $\mathcal G_{i}$ and $\mathcal F_{i}$, with
\[
\widetilde\alpha = \max_{i} \|\mathcal G_{i}\| + \|\mathcal F_{i}-\mathcal G_{i}\|.
\]
Furthermore, without linearity assumption on functions $G_{i}$ and $F_{i}$, the same result can be simply obtained, for
\[
\lambda_{i}^{k+1} = G_{i}(\lambda_{i-1}^{\tau_{i-1,1}^{i}(k)}) + F_{i}(\lambda_{i-1}^{\tau_{i-1,2}^{i}(k)}) - G_{i}(\lambda_{i-1}^{\tau_{i-1,2}^{i}(k)}),
\]
by generalizing the arbitrary norm $\|.\|$ to a distance function over the space of functions $u^{(t)}$, with $u^{(t)}(s) := u(s,t)$.

In conclusion, it is interesting to see that, while asynchronous iterations generally require more restrictive convergence conditions, they can be successfully performed in any Parareal-based solver, for large numbers of time subintervals (${p \to \infty}$), without verification of additional properties, since convergence conditions are very nearly the same.

When, additionally, it happens that the Parareal iterative scheme itself does not actually converge, a finite-time termination is still ensured, at $k = p$, with the solution provided by a sequential application of $F$ from $[T_{0}, T_{1}]$ to $[T_{p-1}, T_{p}]$ (see \cite{FarChan2003}). We use here the idea of nested sets $\{S^{(l)}\}_{l \in \mathbb{N}}$ from Theorem \ref{theo:b1983} to show the finite-time termination of Parareal asynchronous iterations. Note that deriving nested sets of the form \eqref{eq:nsets} applies only when the convergence condition is satisfied, what is not assumed here for finite-time termination. We therefore instead define $S^{(l)}$ so as to have the following property.
\begin{proposition}
\label{prop:pras_nset}
Parareal asynchronous iterations \eqref{eq:apr} generate a sets sequence $\{S^{(l)}\}_{l \in \mathbb{N}}$ such that
\begin{equation}
\label{eq:pras_nset}
\forall l \ge p, \quad S^{(l)} = \{\lambda^{*}\}.
\end{equation}
\end{proposition}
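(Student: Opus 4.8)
The plan is to build an explicit nested sets sequence of the Bertsekas type (Theorem \ref{theo:b1983}), adapted to the $m=2$ mapping \eqref{eq:apr_f}, in which each successive set pins one more leading subvector to its exact value. Concretely, for $\lambda = (\lambda_0, \ldots, \lambda_p)$ I would set
\[
S^{(l)} = S_0^{(l)} \times \cdots \times S_p^{(l)}, \qquad S_i^{(l)} = \begin{cases} \{\lambda_i^*\}, & i \le l,\\ E_i, & i > l, \end{cases}
\]
so that $S^{(l+1)} \subset S^{(l)}$ and, since component $0$ is pinned at every level, $S^{(l)} = \{\lambda^*\}$ as soon as $l \ge p$, which is exactly \eqref{eq:pras_nset}. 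The initialization $\lambda^0 \in S^{(0)}$ holds because $\lambda_0^0 = u^{(0)} = \lambda_0^*$.

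The algebraic engine is the observation that the fixed point satisfies $\lambda_i^* = F(\lambda_{i-1}^*)$, equivalently $G(\lambda_{i-1}^*) + F(\lambda_{i-1}^*) - G(\lambda_{i-1}^*) = \lambda_i^*$, read off from $\lambda^* = \widetilde f(\lambda^*, \lambda^*)$. Hence, whenever both arguments of $\widetilde f_i$ feed it the locked value $\lambda_{i-1}^*$, the output is forced to $\lambda_i^*$, independently of the possibly distinct delays carried by the $G$ and $F-G$ terms. This yields the static mapping property that $\widetilde f$ sends $S^{(l)}$ into $S^{(l+1)}$: for $i \le l+1$ the input subvector $\lambda_{i-1}$ is pinned in $S^{(l)}$, so the $i$-th output is $\lambda_i^*$, while for $i > l+1$ there is no constraint.

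To turn this static property into the claim that the iterations actually traverse the sets, I would run an induction on the component index establishing permanent locking. The base case is $\lambda_0^k = \lambda_0^*$ for all $k$, from \eqref{eq:apr_f} together with the initialization above. For the inductive step, suppose there is a time $K_{i-1}$ past which $\lambda_{i-1}^k = \lambda_{i-1}^*$. By the bounded-delay assumption \eqref{eq:ai_ass2}, both $\tau_{i-1,1}^{i}(k)$ and $\tau_{i-1,2}^{i}(k)$ exceed $K_{i-1}$ for all large enough $k$; by the infinitely-often assumption \eqref{eq:ai_ass1}, component $i$ is then updated at some such $k \in P^{(k)}$, and that update reads only locked values of $\lambda_{i-1}$, hence returns $\lambda_i^*$. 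Every later update again reads locked values, and non-updates leave the component unchanged, so $\lambda_i^k = \lambda_i^*$ permanently from some $K_i \ge K_{i-1}$. Iterating up to $i = p$ locks the entire vector, so the iterates eventually lie in $S^{(p)} = \{\lambda^*\}$, confirming that $\{S^{(l)}\}_{l \in \mathbb{N}}$ is genuinely generated by \eqref{eq:apr} and collapses at level $p$.

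I expect the main obstacle to be precisely this coupling between the two independent delay families $\tau_{i-1,1}^{i}$, $\tau_{i-1,2}^{i}$ and the update set $P^{(k)}$: the locking of component $i$ cannot be read off from a single synchronous application of $\widetilde f$, and one must combine both classical assumptions, waiting until \emph{both} delayed indices have entered the locked region of $\lambda_{i-1}$ and only then invoking the infinitely-often updates, to guarantee a clean update producing $\lambda_i^*$. This mirrors the construction of $\sigma$ in the proof of Theorem \ref{theo:apr_error}, where the set level advances by one exactly when a further leading component becomes exact.
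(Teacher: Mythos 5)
Your construction is correct and its core is identical to the paper's proof: the same nested box sets $S^{(l)} = S^{(l)}_0 \times \cdots \times S^{(l)}_p$ pinning the leading components to $\lambda_i^*$, the same initialization $\lambda_0^0 = u^{(0)} = \lambda_0^*$, and the same two-argument mapping property obtained from $G(\lambda_{i-1}^*) + F(\lambda_{i-1}^*) - G(\lambda_{i-1}^*) = \lambda_i^*$, which is exactly the paper's verification that $\widetilde f(\lambda^{(1)}, \lambda^{(2)}) \in S^{(l+1)}$ for all $\lambda^{(1)}, \lambda^{(2)} \in S^{(l)}$. Where you go beyond the paper is your third paragraph: the paper stops at checking these static Bertsekas-type hypotheses and leaves the traversal of the sets by the actual iterates implicit, which is mildly delicate since Theorem \ref{theo:b1983} is stated only for $m = 1$ while the Parareal mapping \eqref{eq:apr_f} has $m = 2$. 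Your induction on the component index --- using \eqref{eq:ai_ass2} to wait until both delayed reads $\tau_{i-1,1}^{i}(k)$ and $\tau_{i-1,2}^{i}(k)$ land in the locked region, then \eqref{eq:ai_ass1} to guarantee a clean update, and noting that all subsequent updates and non-updates preserve the lock --- gives a self-contained proof that the trajectory permanently locks component after component and genuinely reaches $S^{(p)} = \{\lambda^*\}$, i.e.\ it proves the finite-time termination the proposition is meant to capture rather than only exhibiting sets with the collapse property. Two cosmetic slips, neither affecting validity: \eqref{eq:ai_ass2} is not a bounded-delay assumption (it asserts only $\tau \to +\infty$, which is all your argument uses), and the collapse $S^{(l)} = \{\lambda^*\}$ for $l \ge p$ holds because all components $0, \ldots, p$ are then pinned, not ``since component $0$ is pinned at every level.''
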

\begin{proof}
Let us define sets
\[
S^{(l)} = \left\{\lambda \ | \ \forall i \in \{0, \ldots, \min\{l, p\}\}, \quad \lambda_{i} = \lambda_{i}^{*}\right\}, \quad\quad l \in \mathbb{N},
\]
so that we have \eqref{eq:pras_nset} and
\begin{equation}
\label{eq:apr_ns1}
\lambda^{0} \in S^{(0)}.
\end{equation}
Now let
\[
\lambda^{(1)}, \lambda^{(2)} \in S^{(l)}, \qquad \lambda^{(3)} = \widetilde f(\lambda^{(1)}, \lambda^{(2)}),
\]
where $\widetilde f$ is the Parareal asynchronous iterations mapping (see \eqref{eq:apr_f}). Then we have
\[
\left\{
\begin{array}{lcll}
\lambda^{(3)}_{0} & = & \lambda_{0}^{*}, &\\
\lambda^{(3)}_{i} & = & G(\lambda_{i-1}^{*}) + F(\lambda_{i-1}^{*}) - G(\lambda_{i-1}^{*}) = \lambda_{i}^{*}, & \forall i \in \{1, \ldots, \min\{l+1, p\}\},
\end{array}
\right.
\]
which implies that
\begin{equation}
\label{eq:apr_ns2a}
\forall \lambda^{(1)}, \lambda^{(2)} \in S^{(l)}, \quad \widetilde f(\lambda^{(1)}, \lambda^{(2)}) \in S^{(l+1)}.
\end{equation}

At last, let $\Gamma$ denotes the set of all functions $u^{(t)}$, with $u^{(t)}(s) := u(s,t)$, and let us define sets
\[
S^{(l)}_{i} = \left\{
\begin{array}{lll}
\{\lambda_{i}^{*}\}, & i \in \{0, \ldots, \min\{l, p\}\}, & l \in \mathbb{N},\\
\Gamma, & i \in \{l+1, \ldots, p\}, & l \in \{0, \ldots, p-1\},
\end{array}
\right.
\]
so that we do verify
\begin{equation}
\label{eq:apr_ns3}
S^{(l)} = S^{(l)}_{0} \times \cdots \times S^{(l)}_{p}.
\end{equation}

\eqref{eq:apr_ns1} to \eqref{eq:apr_ns3} ensure that Parareal asynchronous iterations \eqref{eq:apr} generate such a sequence $\{S^{(l)}\}_{l \in \mathbb{N}}$, which concludes the proof.
\end{proof}

\section{Performance analysis}
\label{sec:perf}

\subsection{Computational cost}

Farhat \& Chandesris \cite{FarChan2003} established a bound of $1/2$ on the parallel efficiency (speedup per processor) of the Parareal scheme, assuming that at least one corrective iteration would be needed (which implies at least $k=2$ iterations). It would therefore be of some interest to see at which extent asynchronous iterations could improve the parallel performance, since it is intuitively expected that, here as well, the finite-time termination causes the possible time saving to be bounded.

Without questioning this efficiency bound, Aubanel \cite{Aub2011} provided additional insights by taking into account different possible implementations. We consider here the distributed approach therein proposed, which features better scaling potential. Basically, it consists of distributing the sequential application of $G$ over the set of processors, as described by Algorithm \ref{algo:pr_siac}. We assume the case where we have $p$ processors, so that $\lambda_{i}$, with $i \in \{1, \ldots, p\}$, is held by the processor $i$.
\begin{algorithm}[htbp]
\caption{Distributed Parareal scheme (on processor $i \in \{1, \ldots, p\}$)}
\label{algo:pr_siac}
{\small
\begin{algorithmic}[1]
\STATE{$\lambda_{i-1}^{0} := u^{(0)}$}
\FORALL{$j \in \{1, \ldots, i-1\}$}
	\STATE{$\lambda_{i-1}^{0} := G(\lambda_{i-1}^{0})$}
\ENDFOR
\STATE{$w_{i}^{0} := G(\lambda_{i-1}^{0})$}
\STATE{$\lambda_{i}^{0} := w_{i}^{0}$}
\STATE{$k := 0$}
\STATE{Allow for message reception from processor $i-1$}
\REPEAT
\IF{$i > k$}
	\STATE{$v_{i}^{k} := F(\lambda_{i-1}^{k})$}
	\IF{$i-1 > k$}
		\STATE{Wait for reception of $\lambda_{i-1}^{k+1}$ from processor $i-1$}
		\STATE{$w_{i}^{k+1} := G(\lambda_{i-1}^{k+1})$}
	\ELSE
		\STATE{$w_{i}^{k+1} := w_{i}^{k}$}
	\ENDIF
	\STATE{$\lambda_{i}^{k+1} := w_{i}^{k+1} + v_{i}^{k} - w_{i}^{k}$}
	\IF{$i < p$}
		\STATE{Request sending of $\lambda_{i}^{k+1}$ to processor $i+1$}
	\ENDIF
\ELSE
	\STATE{$\lambda_{i}^{k+1} := \lambda_{i}^{k}$}
\ENDIF
	\STATE{$k := k + 1$}
\UNTIL{$\lambda^{k} \simeq \lambda^{*}$}
\end{algorithmic}
}
\end{algorithm}
Let then $\mathcal{C}_{G}$ and $\mathcal{C}_{F}$ denote the computational costs of applying $G$ and $F$, respectively. The cost of a sequential application of $F$ over the whole time interval is thus given by
\[
\mathcal{C}(p, \mathcal{C}_{F}) = p\ \mathcal{C}_{F}.
\]
What is remarkable with such a distributed implementation is that the processor $i-1$ starts applying $F$ before the processor $i$. Therefore, in a perfect load balance context, one may expect $\lambda_{i-1}^{k+1}$ to be available to the processor $i$ at the moment when it finishes applying $F$, what would result in a computational cost, after $k$ iterations, given by (see \cite{Aub2011})
\[
\mathcal{C}^{(k)}(p, \mathcal{C}_{F}, \mathcal{C}_{G}) = p\ \mathcal{C}_{G} + k (\mathcal{C}_{F} + \mathcal{C}_{G}),
\]
where $p\ \mathcal{C}_{G}$ corresponds to the cost of the initialization phase in Algorithm \ref{algo:pr_siac}.

While accurate enough within the scope of \cite{Aub2011}, this ideal case however assumes negligible communication, vector addition and memory access costs, which is {questionable in real world} situations. Let then, at least, $\mathcal{C}_{C,G}(i,k)$ give communication overhead costs which accumulate with $\mathcal{C}_{G}$ on the processor $i$, at the iteration $k$, without being overlapped by the application of $F$ on the processor $p$. It yields:
\begin{equation}
\label{eq:pr_cost}
\mathcal{C}^{(k)}(p, \mathcal{C}_{F}, \mathcal{C}_{G}, \mathcal{C}_{C,G}) = p\ \mathcal{C}_{G} + k (\mathcal{C}_{F} + \mathcal{C}_{G}) + \sum_{l=1}^{k} \sum_{j=l+1}^{p-1} \mathcal{C}_{C,G}(j,l).
\end{equation}
\begin{remark}
Actually, not only communication cost is not negligible, but it can even be higher than $\mathcal C_{G}$, since both are closely proportional to the size of $\lambda_{i}$. Note that $\mathcal{C}_{C,G}(j,l)$ is of the form
\[
\mathcal{C}_{C,G}(j,l) = \mathcal{C}_{C}(j,l) + \beta_{j,l}\ \mathcal{C}_{G}, \qquad \beta_{j,l} \in \{0, 1\},
\]
with
\[
\mathcal{C}_{C}(j,l) > 0 \quad \iff \quad \beta_{j,l} = 1,
\]
where $\mathcal{C}_{C}(j,l)$ is a non-overlapped communication cost.
\end{remark}
Still, {let us assume a constant uniform communication cost and, for instance, a four-steps fine integration
\[
F = \mathring F \circ \mathring F \circ \mathring F \circ \mathring F,
\]
with $\mathring F$ being some one-step integration which costs nearly as much as $G$. Then, by introducing non-negligible communication $C$ into Aubanel's model \cite{Aub2011},}
Algorithm \ref{algo:pr_siac} would behave like
\[
\begin{array}{cclcccllll}
\text{proc 1} & G \mathring F \mathring F \mathring F \mathring F, &&&&&&&&\\
 & & C &&&&&&&\\
\text{proc 2} & G G \mathring F \mathring F \mathring F & \mathring F G & \mathring F & \mathring F & \mathring F & \mathring F, &&&\\
 & & & C & & & & C & &\\
\text{proc 3} & G G G \mathring F \mathring F & \mathring F \mathring F & - & G & \mathring F & \mathring F & \mathring F \mathring F G & \mathring F \mathring F \mathring F \mathring F, &\\
 & & & & & C & & & C & C\\
\text{proc 4} & G G G G \mathring F & \mathring F \mathring F & \mathring F & - & - & G & \mathring F \mathring F \mathring F & \mathring F G \mathring F \mathring F & \mathring F \mathring F G \mathring F \mathring F \mathring F \mathring F,
\end{array}
\]
which shows that the overhead would lie only in the initialization or the first iteration phase, implying that
\[
\sum_{j=l+1}^{p-1} \mathcal{C}_{C,G}(j,l) = 0, \qquad \forall l > 1.
\]
Our general cost model including $\mathcal{C}_{C,G}$ therefore becomes handy when variations can occur in the communication delays, due for instance to a fault recovery procedure or simple perturbation on the network. We see from the example that a little cost increase in the second message transfer toward the processor 4 would result in an overhead cost. At the same time, the impact of these variations can also be expected to decrease with $k$. Nevertheless, instead of an exhaustive analysis, we are rather interested in assessing a global potential effect of communication. Let us then consider a rough estimation, $\mathcal{\overline C}_{C,G}$, of the average overhead cost at each iteration, on each processor. This leads to:
\begin{proposition}
\label{prop:pr_cost}
The computational cost of $k$ distributed Parareal iterations (Algorithm \ref{algo:pr_siac}) is of the form
\[
\mathcal{C}^{(k)}(p, \mathcal{C}_{F}, \mathcal{C}_{G}, \mathcal{\overline C}_{C,G}) = p\ \mathcal{C}_{G} + k \left(\mathcal{C}_{F} + \mathcal{C}_{G} + \left(p-1 - \frac{k+1}{2}\right) \mathcal{\overline C}_{C,G}\right).
\]
\end{proposition}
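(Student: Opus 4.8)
The plan is to start from the exact (but implementation-dependent) cost expression \eqref{eq:pr_cost} and reduce its double-sum overhead term under the stated modeling assumption that each non-overlapped communication contribution is replaced by a single uniform average value $\mathcal{\overline C}_{C,G}$. Concretely, I would substitute $\mathcal{C}_{C,G}(j,l) \to \mathcal{\overline C}_{C,G}$ inside the term $\sum_{l=1}^{k} \sum_{j=l+1}^{p-1} \mathcal{C}_{C,G}(j,l)$, so that the constant factors out of both sums and the remaining problem collapses to counting the index pairs in the range.

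The key computation is then purely combinatorial. First, the inner sum $\sum_{j=l+1}^{p-1} 1$ counts the integers from $l+1$ to $p-1$ inclusive, giving $p-1-l$. Then the outer sum becomes $\sum_{l=1}^{k}(p-1-l) = k(p-1) - \sum_{l=1}^{k} l = k(p-1) - \tfrac{k(k+1)}{2}$, using the standard triangular-number identity. Factoring a $k$ out of this yields $k\bigl(p-1-\tfrac{k+1}{2}\bigr)$, so that the whole overhead term equals $k\bigl(p-1-\tfrac{k+1}{2}\bigr)\mathcal{\overline C}_{C,G}$.

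Finally I would insert this back into \eqref{eq:pr_cost} and collect the terms carrying a factor $k$: the base term $k(\mathcal{C}_F+\mathcal{C}_G)$ combines with the overhead term to give $k\bigl(\mathcal{C}_F+\mathcal{C}_G+(p-1-\tfrac{k+1}{2})\mathcal{\overline C}_{C,G}\bigr)$, while the initialization term $p\,\mathcal{C}_G$ is left untouched. This is precisely the claimed expression.

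The arithmetic here is entirely routine, so there is no genuine analytic obstacle; the only place to be careful is the index bookkeeping in the inner sum --- making sure the range $j \in \{l+1,\dots,p-1\}$ contributes exactly $p-1-l$ terms rather than an off-by-one variant --- since this is what produces the $(p-1)$ offset and the $\tfrac{k+1}{2}$ half-integer correction. The deeper, non-formal point (which the surrounding discussion rather than the proof must justify) is the legitimacy of collapsing the genuinely $(j,l)$-dependent, possibly vanishing overheads $\mathcal{C}_{C,G}(j,l)$ into a single representative average $\mathcal{\overline C}_{C,G}$; this is a modeling choice, consistent with the preceding observation that the overheads vanish for $l>1$ in the ideal case, and not something the algebra itself can establish.
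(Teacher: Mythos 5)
Your proposal is correct and follows essentially the same route as the paper's proof: substitute the uniform average $\mathcal{\overline C}_{C,G}$ into the double sum of \eqref{eq:pr_cost}, count the inner index range to get $p-1-l$, and apply the triangular-number identity $\sum_{l=1}^{k} l = \tfrac{k(k+1)}{2}$ before factoring out $k$. Your closing remark that the replacement of the $(j,l)$-dependent overheads by a single average is a modeling assumption rather than a derived fact is accurate and consistent with the paper's surrounding discussion.
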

\begin{proof}
From \eqref{eq:pr_cost}, we have
\[
\begin{aligned}
\mathcal{C}^{(k)}(p, \mathcal{C}_{F}, \mathcal{C}_{G}, \mathcal{\overline C}_{C,G}) & = p\ \mathcal{C}_{G} + k (\mathcal{C}_{F} + \mathcal{C}_{G}) + \sum_{l=1}^{k} (p-1 - l)\ \mathcal{\overline C}_{C,G}\\
& = p\ \mathcal{C}_{G} + k (\mathcal{C}_{F} + \mathcal{C}_{G}) + \left(k(p-1) - \frac{k(k+1)}{2}\right) \mathcal{\overline C}_{C,G}.
\end{aligned}
\]
The proposition thus follows.
\end{proof}
\begin{remark}
While $k$ might be expected to remain relatively small, the effects of $\mathcal {\overline C}_{C,G}$ however increase with the number $p$ of processors. It is important to note that accounting such overhead costs does not contradict the perfect load balance assumption. In this sense, this would be a minimum relevant ideal model for the distributed Parareal scheme.
\end{remark}

Let us finally introduce asynchronous iterations into Algorithm \ref{algo:pr_siac}, to obtain Algorithm \ref{algo:pr_aiac}.
\begin{algorithm}[htbp]
\caption{Distributed async-Parareal scheme (on processor $i \in \{1, \ldots, p\}$)}
\label{algo:pr_aiac}
{\small
\begin{algorithmic}[1]
\STATE{$\lambda_{i-1}^{0} := u^{(0)}$}
\FORALL{$j \in \{1, \ldots, i-1\}$}
	\STATE{$\lambda_{i-1}^{0} := G(\lambda_{i-1}^{0})$}
\ENDFOR
\STATE{$w_{i}^{0} := G(\lambda_{i-1}^{0})$}
\STATE{$\lambda_{i}^{0} := w_{i}^{0}$}
\STATE{$k_{i} := 0$}
\STATE{Allow for message reception from processor $i-1$}
\REPEAT
	\STATE{$v_{i}^{k_{i}} := F(\lambda_{i-1}^{k_{i}})$}
		\STATE{$w_{i}^{k_{i}+1} := G(\lambda_{i-1}^{k_{i}+1})$}
	\STATE{$\lambda_{i}^{k_{i}+1} := w_{i}^{k_{i}+1} + v_{i}^{k_{i}} - w_{i}^{k_{i}}$}
	\IF{$i < p$}
		\STATE{Request sending of $\lambda_{i}^{k_{i}+1}$ to processor $i+1$}
	\ENDIF
	\STATE{$k_{i} := k_{i} + 1$}
\UNTIL{$(\lambda_{0}^{0}, \lambda_{1}^{k_{1}}, \ldots, \lambda_{p}^{k_{p}}) \simeq \lambda^{*}$}
\end{algorithmic}
}
\end{algorithm}
Still assuming perfect load balance, the corresponding computational cost is that of the processor having performed the most iterations. To make connection with the implicit {global asynchronous iterations variable, say $\widetilde k$}, let us define, for $k_{i}$ iterations on processor $i$,
\[
\kappa_{i}(\widetilde k) := k_{i} = \operatorname{card}\{l \le \widetilde k \ | \ i \in P^{(l)}\},
\]
and set
\[
\kappa(\widetilde k) := \max_{i \in \{1, \ldots, p\}} \kappa_{i}(\widetilde k).
\]
Then we have:
\begin{proposition}
\label{prop:apr_cost}
The computational cost of $\widetilde k$ distributed Parareal asynchronous iterations (Algorithm \ref{algo:pr_aiac}) is of the form
\[
\mathcal{\widetilde C}^{(\widetilde k)}(p, \mathcal{C}_{F}, \mathcal{C}_{G}) = p\ \mathcal{C}_{G} + \kappa(\widetilde k) (\mathcal{C}_{F} + \mathcal{C}_{G}).
\]
\end{proposition}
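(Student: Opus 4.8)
The plan is to read the cost of Algorithm \ref{algo:pr_aiac} off the pseudocode line by line, exactly as the ideal model was read off Algorithm \ref{algo:pr_siac} for the synchronous Proposition \ref{prop:pr_cost}, and then to aggregate the per-processor costs under the perfect-load-balance assumption together with the definition of $\kappa$. The guiding observation is that the asynchronous loop is structurally lighter than the synchronous one: the blocking ``wait for reception'' step present in Algorithm \ref{algo:pr_siac} has disappeared, so I expect the communication-overhead sum $\sum_{l}\sum_{j}\mathcal{C}_{C,G}(j,l)$ of \eqref{eq:pr_cost} to vanish entirely, leaving only the pure $F$ and $G$ work.

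First I would account for a single pass of the REPEAT loop. On processor $i$ it consists of exactly one fine integration $F$ (line 10) and one coarse integration $G$ (line 11); the vector update (line 12) and the nonblocking send to processor $i+1$ (lines 13--15) are taken as negligible, the latter being overlapped by the subsequent evaluation of $F$. Hence each iteration contributes $\mathcal{C}_{F}+\mathcal{C}_{G}$. Next I would account for the initialization phase (lines 1--5): processor $i$ applies $G$ a total of $i$ times --- $(i-1)$ times in the loop of lines 2--4 and once more at line 5 --- for a cost $i\,\mathcal{C}_{G}$, which is maximal, equal to $p\,\mathcal{C}_{G}$, on processor $p$. This is identical to the initialization of Algorithm \ref{algo:pr_siac} and hence reproduces the $p\,\mathcal{C}_{G}$ term.

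Finally I would aggregate. Under perfect load balance every application of $F$ and of $G$ costs the same on all processors, and because the asynchronous loop never idles, each processor simply chains its initialization followed by its own iterations; the wall-clock cost of the run is that of the busiest processor, whose work is its initialization plus $\kappa_{i}(\widetilde k)$ iterations. Bounding the initialization by $p\,\mathcal{C}_{G}$ and each per-processor count by $\kappa(\widetilde k)=\max_{i}\kappa_{i}(\widetilde k)$ then yields $p\,\mathcal{C}_{G}+\kappa(\widetilde k)(\mathcal{C}_{F}+\mathcal{C}_{G})$, which is the claimed form.

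I expect the main obstacle to be precisely this aggregation step. One must justify that in the asynchronous regime the inter-processor communication is genuinely hidden behind the fine integration, so that no non-overlapped term of the type $\mathcal{\overline C}_{C,G}$ appearing in Proposition \ref{prop:pr_cost} survives; and one must argue that taking the maximum over processors --- rather than tracking the individual counts $\kappa_{i}(\widetilde k)$ --- is the correct notion of wall-clock cost here, with the full $p\,\mathcal{C}_{G}$ initialization charged in order to close the expression into the stated form.
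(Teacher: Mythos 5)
Your proposal is correct and follows essentially the paper's own reasoning: the paper's proof is the one-line observation that communication is completely overlapped, combined with the surrounding text's stipulation that under perfect load balance the cost is that of the processor having performed the most iterations (counted by $\kappa(\widetilde k)$), plus the $p\,\mathcal{C}_{G}$ initialization. Your line-by-line accounting of Algorithm \ref{algo:pr_aiac} merely makes this explicit, so there is no substantive difference in approach.
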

\begin{proof}
This is straightforward, since communication is completely overlapped.
\end{proof}

\subsection{Performance gain}

The error bounds from Theorem \ref{theo:pr_error} and Theorem \ref{theo:apr_error} indicate that the convergence speeds of Parareal synchronous and asynchronous iterations depend on how small the respective factors
\[
\alpha = \frac{1 - \|\mathcal G\|^{p}}{1 - \|\mathcal G\|} \|\mathcal F - \mathcal G\|, \qquad \widetilde\alpha = \|\mathcal G\| + \|\mathcal F-\mathcal G\|
\]
are. As intuitively expected, we verify the following comparison.
\begin{proposition}
\label{prop:conv_rate}
\[
\widetilde \alpha < 1 \quad \implies \quad \alpha < \widetilde \alpha.
\]
\end{proposition}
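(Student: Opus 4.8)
The plan is to reduce the claim to an elementary inequality between two nonnegative reals. I would write $g := \|\mathcal G\|$ and $d := \|\mathcal F - \mathcal G\|$, so that $\widetilde\alpha = g + d$ and $\alpha = \frac{1 - g^{p}}{1 - g}\, d$, both $g$ and $d$ being nonnegative since they are norms. First I would record the immediate consequence of the hypothesis $\widetilde\alpha < 1$, namely $g + d < 1$, which in particular forces $g < 1$. This makes $1 - g > 0$, so the closed-form geometric-sum expression for $\alpha$ is legitimate and equal to $(1 + g + \cdots + g^{p-1})\, d$.

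Then I would clear denominators. Since $1 - g > 0$, the target $\alpha < \widetilde\alpha$ is equivalent to $(1 - g^{p})\, d < (g + d)(1 - g)$. Expanding the right-hand side to $g - g^{2} + d - g d$ and cancelling the common term $d$ reduces the inequality to $-g^{p} d < g(1 - g - d)$, i.e. to $g^{p} d > g(g + d - 1)$. At this stage the hypothesis does all the work: because $g + d < 1$ we have $g + d - 1 < 0$, so the right-hand side $g(g + d - 1)$ is nonpositive, while the left-hand side $g^{p} d$ is nonnegative; the inequality therefore holds.

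The point I would treat with a little care is \emph{strictness}. The chain above gives a strict inequality as soon as the two sides do not both collapse to $0$: if $g > 0$ then $g(g + d - 1) < 0 \le g^{p} d$ strictly, which yields $\alpha < \widetilde\alpha$. The genuinely degenerate case is $g = 0$ (equivalently $\mathcal G = 0$), where $\alpha = d = \widetilde\alpha$, so only equality holds; this lies outside the intended regime, in which $G$ is a nontrivial coarse propagator, so I would either invoke the standing nondegeneracy $\|\mathcal G\| > 0$ or remark that for $g = 0$ the conclusion degenerates to equality. This edge case is the only real subtlety; the core estimate is otherwise routine.

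An alternative, slightly cleaner packaging of the same computation is the sandwich $\alpha = (1 + g + \cdots + g^{p-1})\, d < \frac{d}{1 - g} \le g + d = \widetilde\alpha$, where the first step (strict for $g > 0$ and finite $p$) uses $g^{p} > 0$, and the second step is precisely the rearrangement $0 \le g(1 - g - d)$ supplied by the hypothesis $g + d < 1$. I find this form appealing because it cleanly isolates where the finiteness of $p$ is used (the strict geometric bound) from where the convergence hypothesis $\widetilde\alpha < 1$ is used (the final comparison), and it makes transparent why $p \to +\infty$ turns the strict inequality into the equality $\frac{d}{1-g} = g+d$ observed in the asymptotic remark following Corollary \ref{cor:pr_conv}.
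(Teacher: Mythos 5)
Your proof is correct and takes essentially the paper's approach: your ``alternative packaging'' $\alpha = (1+g+\cdots+g^{p-1})\,d < \frac{d}{1-g} \le g+d$ is exactly the paper's two-step argument (the paper obtains the second comparison, strictly, by a contradiction argument that divides by $\|\mathcal G\|$), and your main cleared-denominator computation is the same algebra reorganized into a single pass. Your explicit handling of the degenerate case $\|\mathcal G\| = 0$, where $\alpha = \widetilde\alpha$ and strictness fails, is warranted --- the paper's proof silently assumes $\|\mathcal G\| > 0$ at its division step --- though note that in the sandwich form the strictness of the first inequality also requires $\|\mathcal F - \mathcal G\| > 0$, a degenerate case ($\mathcal F = \mathcal G$, giving $\alpha = 0 < \widetilde\alpha$ when $\|\mathcal G\| > 0$) that your main cleared-denominator version covers cleanly.
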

\begin{proof}
Since $\|\mathcal G\| < 1$, we have
\[
\alpha < \frac{1}{1 - \|\mathcal G\|} \|\mathcal F - \mathcal G\|.
\]
Assume that
\[
\widetilde \alpha \le \frac{1}{1 - \|\mathcal G\|} \|\mathcal F - \mathcal G\|.
\]
Then we have
\[
\begin{aligned}
(\|\mathcal G\| + \|\mathcal F-\mathcal G\|) (1 - \|\mathcal G\|) & \le \|\mathcal F - \mathcal G\|,\\
\|\mathcal G\| (1 - \|\mathcal G\|) + \|\mathcal F-\mathcal G\| (1 - \|\mathcal G\| - 1) & \le 0,\\
\|\mathcal G\| (1 - \|\mathcal G\| - \|\mathcal F-\mathcal G\|) & \le 0,\\
1 - \|\mathcal G\| - \|\mathcal F-\mathcal G\| & \le 0,\\
\widetilde \alpha & \ge 1.
\end{aligned}
\]
Consequently,
\[
\widetilde \alpha < 1 \quad \implies \quad \frac{1}{1 - \|\mathcal G\|} \|\mathcal F - \mathcal G\| < \widetilde \alpha,
\]
which concludes the proof.
\end{proof}

Nonetheless, if we consider synchronous and asynchronous iterations variables $k$ and $\widetilde k$, respectively, we should bear in mind that $\widetilde k$ grows much faster than $k$. Therefore, we can reasonably expect that $\sigma(\widetilde k)$ also grows faster than $k$, sufficiently to make Parareal asynchronous iterations reach the desired precision sooner than synchronous ones do.

To be more practical, besides convergence rate, we see from Algorithm \ref{algo:pr_aiac} that, either $\lambda_{i-1}^{k_{i}+1}$ is always an updated available input for the processor $i$, and hence, this reduces to a classical Parareal execution, or it happens that
\[
\lambda_{i-1}^{k_{i}+1} = \lambda_{i-1}^{k_{i}}
\]
for some processors $i$, and hence, one has
\[
\lambda_{i}^{k_{i}+1} = F(\lambda_{i-1}^{k_{i}}).
\]
The advantage of the asynchronous scheme is to account for the fact that this could be enough to reach the desired precision, instead of waiting for an updated input in order to necessarily output an exact Parareal update. Note that, in any case then, we should have
\[
\kappa(\widetilde k) \ge k,
\]
where $k$ would be the number of synchronous iterations. From our cost model in Proposition \ref{prop:pr_cost} and Proposition \ref{prop:apr_cost}, we deduce:
\begin{corollary}
\label{cor:speedup}
The speedup over classical distributed Parareal is bounded as
\[
\frac{\mathcal{C}^{(k)}(p, \mathcal{C}_{F}, \mathcal{C}_{G}, \mathcal{\overline C}_{C,G})}{\mathcal{\widetilde C}^{(\widetilde k)}(p, \mathcal{C}_{F}, \mathcal{C}_{G})} \ \le \ 1 + \frac{\left(p-2\right) \mathcal{\overline C}_{C,G}}{\mathcal{C}_{F} + \mathcal{C}_{G}}.
\]
\end{corollary}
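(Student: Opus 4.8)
The plan is to form the speedup ratio directly from the two cost expressions and reduce it to an elementary inequality through a short chain of monotonicity bounds. First I would substitute the formulas of Proposition~\ref{prop:pr_cost} and Proposition~\ref{prop:apr_cost}, writing the ratio as
\[
\frac{p\,\mathcal{C}_{G} + k\left(\mathcal{C}_{F} + \mathcal{C}_{G} + \left(p-1-\frac{k+1}{2}\right)\mathcal{\overline C}_{C,G}\right)}{p\,\mathcal{C}_{G} + \kappa(\widetilde k)\,(\mathcal{C}_{F} + \mathcal{C}_{G})}.
\]
Since all cost quantities are nonnegative and $\kappa(\widetilde k)\ge k$ (as noted just before the statement), replacing $\kappa(\widetilde k)$ by $k$ in the denominator can only enlarge the fraction, so it suffices to bound
\[
1 + \frac{k\left(p-1-\frac{k+1}{2}\right)\mathcal{\overline C}_{C,G}}{p\,\mathcal{C}_{G} + k\,(\mathcal{C}_{F} + \mathcal{C}_{G})}.
\]

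The second step is to control the coefficient multiplying $\mathcal{\overline C}_{C,G}$. Because at least one corrective iteration is performed, $k \ge 1$, and therefore $p - 1 - \frac{k+1}{2} \le p - 2$; this replaces the coefficient by $p-2$ while preserving the upper bound. Finally, since $p\,\mathcal{C}_{G}\ge 0$ one has $\frac{k}{p\,\mathcal{C}_{G} + k\,(\mathcal{C}_{F}+\mathcal{C}_{G})} \le \frac{1}{\mathcal{C}_{F}+\mathcal{C}_{G}}$, which turns the remaining fraction into $\frac{(p-2)\,\mathcal{\overline C}_{C,G}}{\mathcal{C}_{F}+\mathcal{C}_{G}}$ and yields exactly the claimed bound.

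I expect the main subtlety to be not any single calculation but keeping all three monotonicity arguments pointed in the same (upper-bounding) direction simultaneously: the $\kappa(\widetilde k)\ge k$ replacement enlarges the denominator's contribution, the $k\ge 1$ step enlarges the numerator coefficient, and the $p\,\mathcal{C}_{G}\ge 0$ step enlarges the fraction, so each must be checked to act on the correct side. As a cleaner self-check, I would cross-multiply the target inequality right after the $\kappa(\widetilde k)\ge k$ reduction; the gap between the two sides then collapses to
\[
k\,\frac{1-k}{2}\,(\mathcal{C}_{F}+\mathcal{C}_{G}) \;\le\; (p-2)\,p\,\mathcal{C}_{G},
\]
whose left-hand side is nonpositive for $k\ge 1$ and whose right-hand side is nonnegative for $p\ge 2$, making the inequality transparent. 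I would present the chain-of-bounds version for readability and keep this cross-multiplied identity as the verification that the chosen direction of each estimate is indeed consistent.
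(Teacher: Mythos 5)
Your proof is correct, and it takes a somewhat different route from the paper's, which is worth noting. The paper's own proof begins with a case split according to whether the speedup ratio is below or at least $1$, because its key step subtracts the initialization cost $p\,\mathcal{C}_{G}$ from both numerator and denominator --- an operation that enlarges a fraction only when that fraction is at least $1$ --- and thereby arrives at the intermediate form $\frac{k}{\kappa(\widetilde k)}\bigl(1 + \bigl(p-1-\frac{k+1}{2}\bigr)\mathcal{\overline C}_{C,G}/(\mathcal{C}_{F}+\mathcal{C}_{G})\bigr)$, which it then maximizes at $\kappa(\widetilde k)=k$ and $k=1$. You dispense with the case analysis entirely: shrinking the denominator via $\kappa(\widetilde k)\ge k$, replacing the coefficient by $p-2$ using $k\ge 1$, and discarding $p\,\mathcal{C}_{G}\ge 0$ are each unconditionally upper-bounding moves (and you perform them in the right order, so that the numerator is already nonnegative before $p\,\mathcal{C}_{G}$ is dropped). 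Your cross-multiplied verification, reducing the gap to $k\,\frac{1-k}{2}\,(\mathcal{C}_{F}+\mathcal{C}_{G}) \le (p-2)\,p\,\mathcal{C}_{G}$, is accurate and uses exactly the hypotheses $k\ge 1$ and $p\ge 2$ that the paper also invokes. What the paper's detour buys is the interpretable intermediate quantity $k/\kappa(\widetilde k)$, which it explicitly reuses in the discussion following the corollary when relating the speedup to the asymptotic regime; what your version buys is a shorter, case-free chain in which no step risks flipping the inequality direction --- the very subtlety that forces the paper's two-case structure, since subtracting a common positive quantity from numerator and denominator reverses the comparison when the ratio is below $1$.
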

\begin{proof}
If
\[
\mathcal{C}^{(k)}(p, \mathcal{C}_{F}, \mathcal{C}_{G}, \mathcal{\overline C}_{C,G}) < \mathcal{\widetilde C}^{(\widetilde k)}(p, \mathcal{C}_{F}, \mathcal{C}_{G}),
\]
then
\[
\frac{\mathcal{C}^{(k)}(p, \mathcal{C}_{F}, \mathcal{C}_{G}, \mathcal{\overline C}_{C,G})}{\mathcal{\widetilde C}^{(\widetilde k)}(p, \mathcal{C}_{F}, \mathcal{C}_{G})} \ < \ 1 \ \le \ 1 + \frac{\left(p-2\right) \mathcal{\overline C}_{C,G}}{\mathcal{C}_{F} + \mathcal{C}_{G}}, \qquad\qquad p \ge 2.
\]
Now, considering that
\[
\mathcal{C}^{(k)}(p, \mathcal{C}_{F}, \mathcal{C}_{G}, \mathcal{\overline C}_{C,G}) \ge \mathcal{\widetilde C}^{(\widetilde k)}(p, \mathcal{C}_{F}, \mathcal{C}_{G}),
\]
we have
\[
\begin{aligned}
\frac{\mathcal{C}^{(k)}(p, \mathcal{C}_{F}, \mathcal{C}_{G}, \mathcal{\overline C}_{C,G})}{\mathcal{\widetilde C}^{(\widetilde k)}(p, \mathcal{C}_{F}, \mathcal{C}_{G})} & \le \frac{\mathcal{C}^{(k)}(p, \mathcal{C}_{F}, \mathcal{C}_{G}, \mathcal{\overline C}_{C,G}) - p\ \mathcal{C}_{G}}{\mathcal{\widetilde C}^{(\widetilde k)}(p, \mathcal{C}_{F}, \mathcal{C}_{G}) - p\ \mathcal{C}_{G}}\\
& = \frac{k}{\kappa(\widetilde k)} \left(1 + \frac{\left(p-1 - \frac{k+1}{2}\right) \mathcal{\overline C}_{C,G}}{\mathcal{C}_{F} + \mathcal{C}_{G}}\right),
\end{aligned}
\]
where we recall that $p\ \mathcal{C}_{G}$ is the initialization cost. With the maximum gain expected at $\kappa(\widetilde k) = k$, this bound reaches its maximum at $k = 1$, which leads to the result.
\end{proof}

Still, we should point out the fact that, in HPC context, and particularly for time-parallel time-integration (see, e.g., introduction and comments in \cite{FarChan2003, Aub2011, NielHest2016}), one is generally more interested in being able to improve performance by providing more and more processors, in order to keep a given amount of work per processor (strong scaling). In this sense, the bound of 50\% over the traditional parallel efficiency is not actually limiting the desired scaling potential. Let us then highlight, for a more complete performance analysis, that the asymptotic speedup from the distributed Parareal over sequential application of $F$ is given by
\[
\lim_{p \to +\infty} \frac{\mathcal{C}(p, \mathcal{C}_{F})}{\mathcal{C}^{(k)}(p, \mathcal{C}_{F}, \mathcal{C}_{G}, \mathcal{\overline C}_{C,G})} \ = \ \frac{\mathcal{C}_{F}}{\mathcal{C}_{G} + k\ \mathcal{\overline C}_{C,G}} \ \le \ \frac{\mathcal{C}_{F}}{\mathcal{C}_{G} + \mathcal{\overline C}_{C,G}},
\]
while that of the distributed async-Parareal reduces to
\[
\lim_{p \to +\infty} \frac{\mathcal{C}(p, \mathcal{C}_{F})}{\mathcal{\widetilde C}^{(\widetilde k)}(p, \mathcal{C}_{F}, \mathcal{C}_{G})} \ = \ \frac{\mathcal{C}_{F}}{\mathcal{C}_{G}},
\]
which finally implies that
\[
\lim_{p \to +\infty} \frac{\mathcal{C}^{(k)}(p, \mathcal{C}_{F}, \mathcal{C}_{G}, \mathcal{\overline C}_{C,G})}{\mathcal{\widetilde C}^{(\widetilde k)}(p, \mathcal{C}_{F}, \mathcal{C}_{G})} \ = \ 1 + k\ \frac{\mathcal{\overline C}_{C,G}}{\mathcal{C}_{G}}.
\]

As expected, the performance gain from async-Parareal should increase with $\mathcal{\overline C}_{C,G}$, even in a perfectly well balanced configuration, and we saw that $\mathcal{\overline C}_{C,G}$ can be of the same order as $\mathcal{C}_{G}$. It may however seem surprising that the asymptotic speedup does depend on $k$, and not on $\kappa(\widetilde k)$ too. When looking at the classical speedup though (in the proof of Corollary \ref{cor:speedup}), we may expect that, for given $\mathcal{\overline C}_{C,G}$, $F$ and $G$, the ratio $k/\kappa(\widetilde k)$ is bounded by a constant. This is very likely to be true, as the current async-Parareal scheme is also quite close to the classical Parareal one. Further analysis may be needed to fully assess such a relation. Still in this case, while the speedup would decrease with $k$, the asymptotic speedup rather increases. Indeed, in the distributed Parareal algorithm, the amount of communication decreases with $k$ (as partial termination is taken into account), which lessens the impact of $\mathcal{\overline C}_{C,G}$. However, for very high numbers of processors, this diminution is generally negligible, as $p-k$ remains significantly high for a long period of time.

{In the end, while the practical performance gain from applying asynchronous iterations to an existing distributed scheme is no more questionable for spatial parallelism on PDEs, the case of time parallelism is quite less obvious. We therefore present here experimental results only conducted to confirm the possibility of performance gain here as well.}
Especially, such a configuration where each processor depends on only one other processor is probably the worst for squeezing the most out of asynchronous iterations. Furthermore, to give full advantage to the classical distributed Parareal, we experimented using perfect static load balance, on a homogeneous supercomputer with high-speed communication (which minimizes $\mathcal{\overline C}_{C,G}$).
We reported another, more detailed, experimental investigation in \cite{MagEtAl2018}, where we mainly focused, throughout the paper, on implementation aspects within the Message Passing Interface (MPI) framework. This issue about asynchronous iterations programming has been more generally discussed in \cite{MagGBen2017,MagGBen2018}, where an MPI-based programming library has been proposed, including tools to evaluate the loop stopping criterion of Algorithm \ref{algo:pr_aiac} (see \cite{MagGBen2018b} for a full introduction to exact residual error evaluation under asynchronous iterations).

\section{Numerical results}
\label{sec:res}

\subsection{Problem and experimental settings}

The experimental case consisted of simulating an apartment gradually cooled by an air conditioner. We used a simple heat evolution model of the form
\[
\frac{\partial u}{\partial t} = \Delta u,
\]
by setting a constant Dirichlet boundary condition on the air conditioner to simulate a continuous cooling at 73.4 degrees Fahrenheit (23 degrees Celsius), from an initial homogeneous temperature at 86.0 degrees Fahrenheit (30 degrees Celsius). Spatial discrete equations were obtained through $P_1$-Lagrange finite-elements integration upon a mesh composed of 171,478 tetrahedrons and 33,796 nodes (see Figure \ref{fig:apart-mesh}), generated by the TetGen \cite{Si2015} free software. The computer aided design (CAD) was performed using the CATIA V5 proprietary software tool ($\copyright$Dassault Syst\`emes).
\begin{figure}[htbp]
\centering
\includegraphics[width=0.33\textwidth]{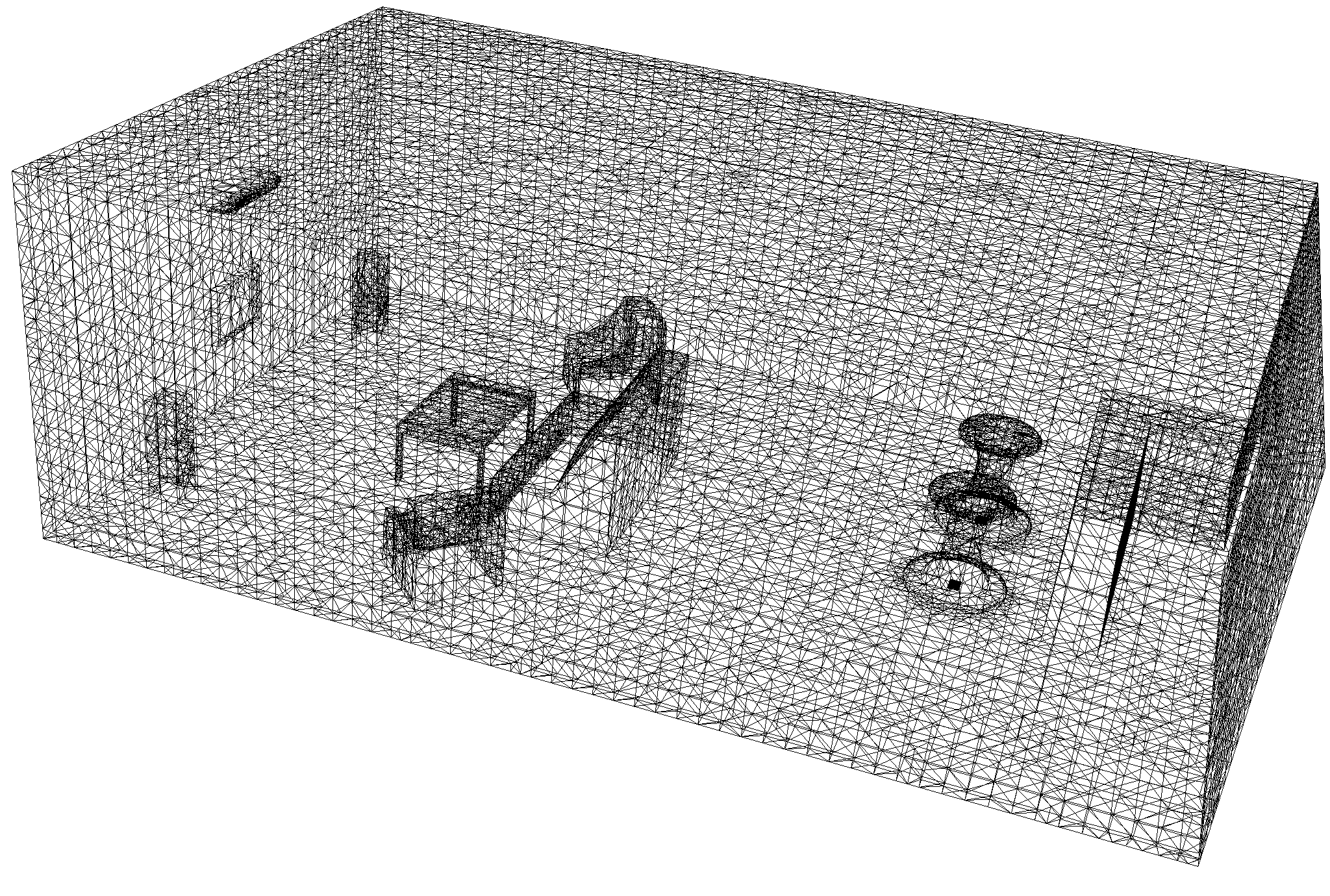}
\quad
\includegraphics[width=0.31\textwidth]{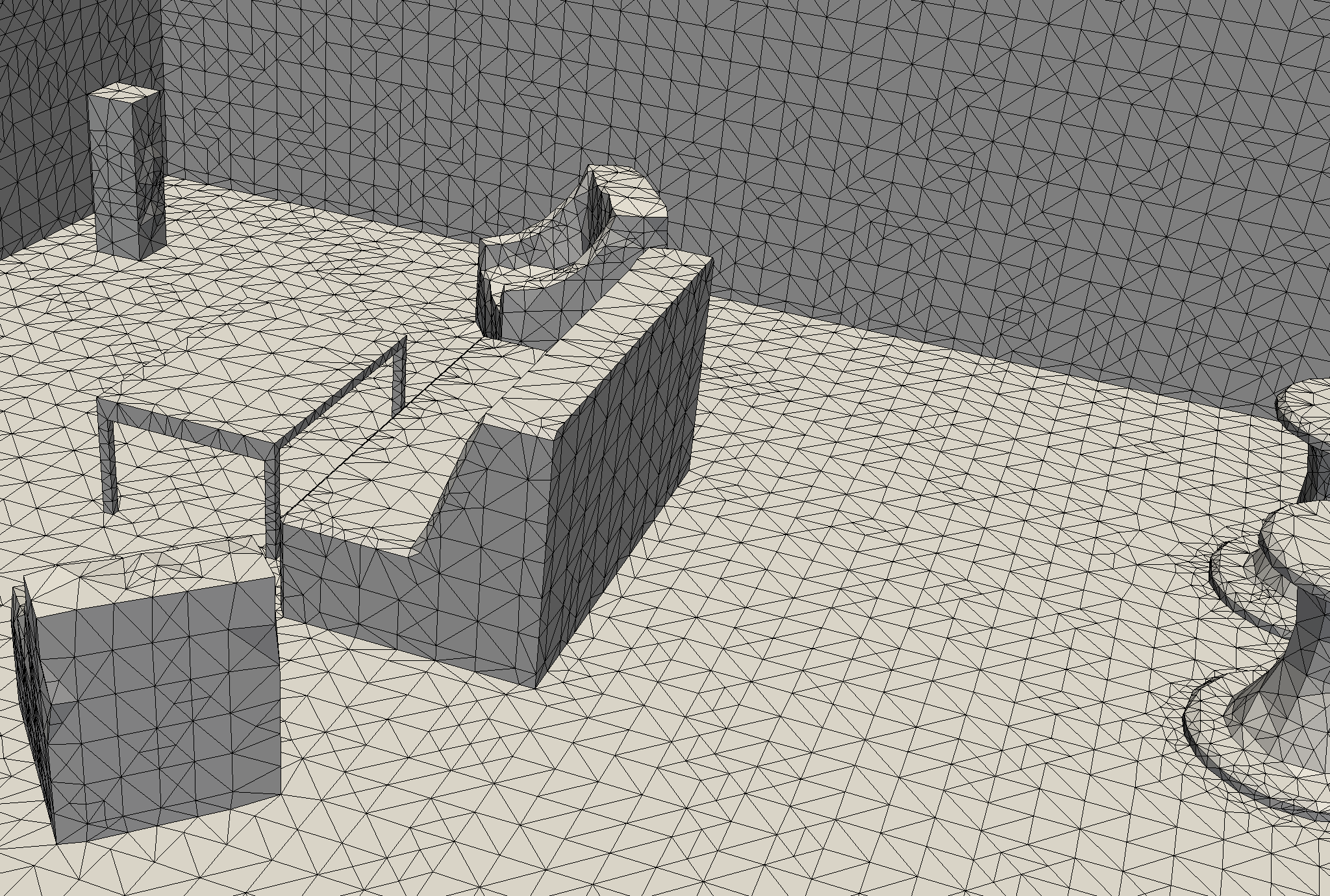}
\caption{Finite-elements mesh of the experimental 3D model.}
\label{fig:apart-mesh}
\end{figure}
Operators $G$ and $F$ consisted of either backward Euler or trapezoidal rule time integration, with respective time steps
\[
\Delta T = 0.2, \qquad \delta t = 0.002,
\]
which implied that
\[
\frac{\mathcal C_{F}}{\mathcal C_{G}} \simeq 100.
\]
Each spatial linear system resolution was sequential and based on LU factorization. Figure \ref{fig:apart-degree} shows the evolution of the temperature over time.
\begin{figure}[htbp]
\centering
\includegraphics[width=0.32\textwidth]{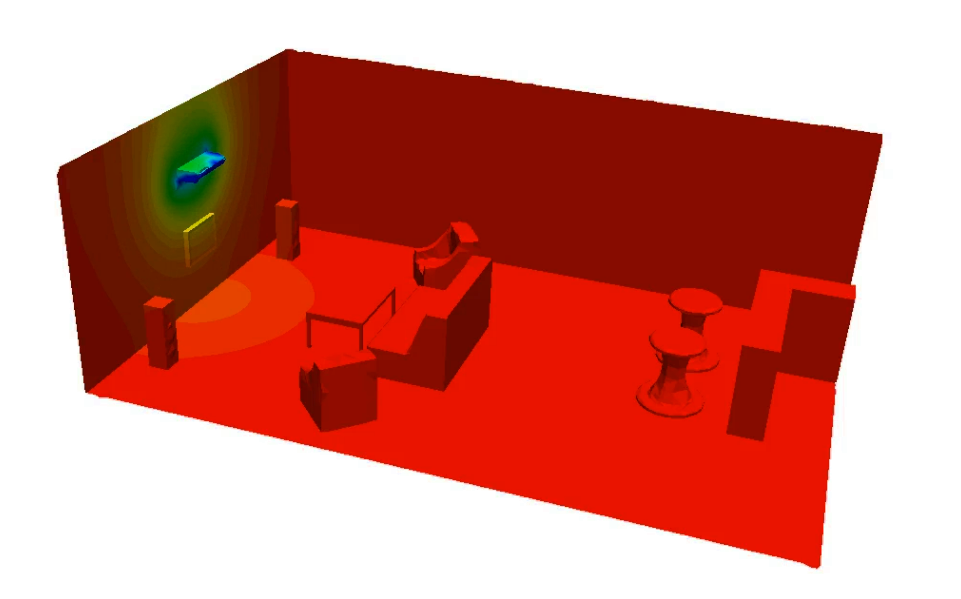}
\includegraphics[width=0.32\textwidth]{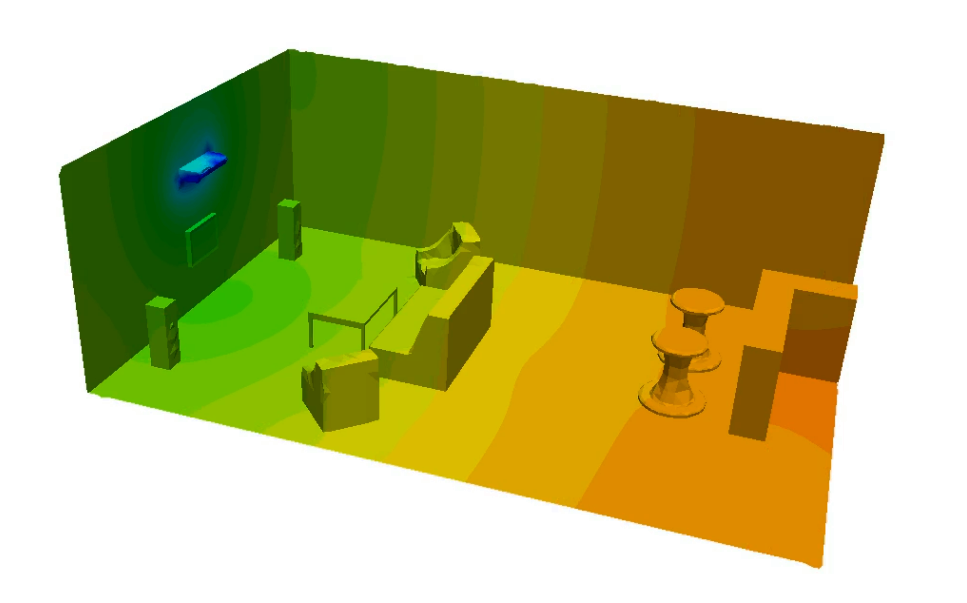}
\includegraphics[width=0.32\textwidth]{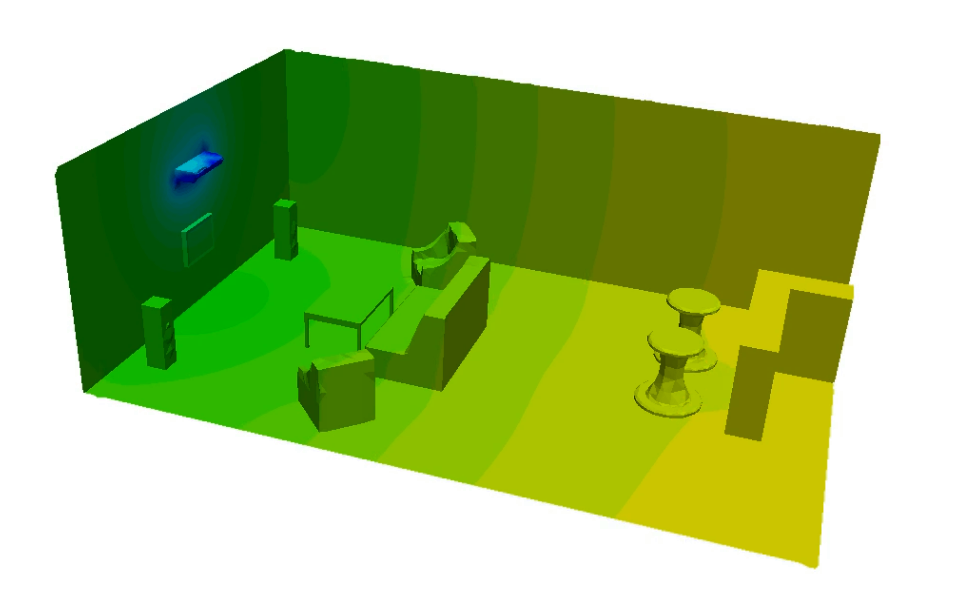}
\caption{Heat distribution at times $t=1$, $t=10$ and $t=20$.}
\label{fig:apart-degree}
\end{figure}

We performed several parallel runs on an SGI Altix ICE supercomputer of 68 nodes interconnected by a QDR Infiniband network (40Gb/s). Each node consisted of two six-cores Intel Xeon X5650 CPUs (12 cores per node) at 2.66GHz, and 21GB RAM allocated for running jobs. The MPI library SGI-MPT was loaded as communication middleware. Execution parameters included the number $p$ of MPI processes, which always equals the number of time intervals, and the loop stopping criterion
{
\begin{equation}
\label{eq:thresh}
\|\lambda^{k}-\lambda^{k-1}\|_{\infty} < \varepsilon, \qquad \varepsilon \in \{10^{-6}, 10^{-5}\}.
\end{equation}
}
We refer to \cite{MagGBen2018} for the implementation of such a criterion in case of asynchronous iterations $k_{1}$ to $k_{p}$.

\subsection{Performance comparison}

{
Table \ref{tab:apart-convergence} and Table \ref{tab:apart-convergence2} report average results, where $\mathcal T$ denotes wall-clock execution times in seconds.
We recall that $T_{p}$ is the simulated end time, and that $\mathcal{\overline C}_{C,G}$ is the average non-overlapped overhead costs induced by communication at each iteration. This does not directly corresponds to communication costs, and therefore is hardly accurately predictable. We rather deduced values $\mathcal{\overline C}_{C,G}$ which make $\mathcal C^{(k)}$ match the measured execution times. We also measured, on average,
\[
\mathcal C_{G} \simeq 0.14 \text{ sec}.
\]
Compared to these $\mathcal{\overline C}_{C,G}$ values, it seems clear that, even in such an ideal HPC computational environment, the actual impact of communication costs cannot be neglected, despite the optimized distributed implementation from Aubanel \cite{Aub2011}.
}
\begin{table}[tbhp]
\caption{Comparison of synchronous and asynchronous Parareal. \newline $F$: trapezoidal rule, $G$: backward Euler, $\varepsilon$: $10^{-6}$.}
\label{tab:apart-convergence}
\centering
{\footnotesize
\begin{tabular}{ccc}
\hline\noalign{\smallskip}
& Parareal & Async-Parareal\\
\noalign{\smallskip}\hline\noalign{\smallskip}
\begin{tabular}{cr}
$p$ & $T_{p}$\\
\noalign{\smallskip}\hline\noalign{\smallskip}
16 & 3.2\\
24 & 4.8\\
32 & 6.4\\
48 & 9.6\\
64 & 12.8\\
90 & 18.0\\
\end{tabular}
&
\begin{tabular}{rrcc}
$\mathcal T$ & $k$ & {$\mathcal{\overline C}_{C,G}$} & $\|\lambda^{k}-\lambda^{*}\|_{\infty}$\\
\noalign{\smallskip}\hline\noalign{\smallskip}
280 & 10 & {1.530} & 1.49E-07\\
420 & 14 & {1.010} & 1.86E-07\\
691 & 20 & {1.011} & 3.75E-08\\
1169 & 32 & {0.727} & 4.95E-11\\
1488 & 44 & {0.486} & 3.75E-08\\
2676 & 64 & {0.486} & 4.75E-11\\
\end{tabular}
&
\begin{tabular}{rrrc}
$\mathcal T$ & $\kappa(\widetilde k)$ & $\mathcal {\widetilde C}^{(\widetilde k)}$ & $\|\lambda^{\widetilde k}-\lambda^{*}\|_{\infty}$\\
\noalign{\smallskip}\hline\noalign{\smallskip}
342 & 24 & 337 & 5.01E-08\\
502 & 34 & 484 & 1.06E-07\\
644 & 44 & 627 & 1.71E-08\\
922 & 66 & 940 & 3.33E-11\\
1255 & 92 & 1310 & 2.79E-10\\
1938 & 149 & 2119 & 4.78E-11\\
\end{tabular}\\
\noalign{\smallskip}\hline
\end{tabular}
}
\end{table}
\begin{table}[tbhp]
\caption{Comparison of synchronous and asynchronous Parareal. \newline $F$: backward Euler, $G$: backward Euler, $\varepsilon$: $10^{-5}$.}
\label{tab:apart-convergence2}
\centering
{
{\footnotesize
\begin{tabular}{ccc}
\hline\noalign{\smallskip}
& Parareal & Async-Parareal\\
\noalign{\smallskip}\hline\noalign{\smallskip}
\begin{tabular}{cr}
$p$ & $T_{p}$\\
\noalign{\smallskip}\hline\noalign{\smallskip}
16 & 3.2\\
24 & 4.8\\
32 & 6.4\\
48 & 9.6\\
\end{tabular}
&
\begin{tabular}{rrcc}
$\mathcal T$ & $k$ & $\mathcal{\overline C}_{C,G}$ & $\|\lambda^{k}-\lambda^{*}\|_{\infty}$\\
\noalign{\smallskip}\hline\noalign{\smallskip}
116 & 6 & 0.233 & 4.43E-07\\
131 & 6 & 0.307 & 4.99E-07\\
126 & 6 & 0.211 & 3.41E-07\\
139 & 6 & 0.196 & 1.32E-07\\
\end{tabular}
&
\begin{tabular}{rrrc}
$\mathcal T$ & $\kappa(\widetilde k)$ & $\mathcal {\widetilde C}^{(\widetilde k)}$ & $\|\lambda^{\widetilde k}-\lambda^{*}\|_{\infty}$\\
\noalign{\smallskip}\hline\noalign{\smallskip}
348 & 18 & 293 & 4.68E-08\\
379 & 19 & 297 & 6.44E-08\\
382 & 19 & 276 & 9.35E-08\\
360 & 15 & 211 & 1.48E-07\\
\end{tabular}\\
\noalign{\smallskip}\hline
\end{tabular}
}
}
\end{table}

{
As expected, Table \ref{tab:apart-convergence} shows that the overhead costs at one iteration tend to diminish as $k$ grows, which makes the average $\mathcal{\overline C}_{C,G}$ diminish as well, while the sum of these costs can however keep increasing, which explains the increasing better performance of asynchronous execution. On the other hand, estimated values $\mathcal {\widetilde C}^{(\widetilde k)}$ are quite close to measured times (mostly in Table \ref{tab:apart-convergence}). This is explained by the fact that asynchronous execution somehow converts overhead communication costs into additional iterations which are accounted by $\kappa(\widetilde k)$. Dealing with the prediction of $\kappa(\widetilde k)$ could be a less harassing task, compared to $\mathcal{\overline C}_{C,G}$.
One can notice for instance that
\[
2 < \frac{\kappa(\widetilde k)}{k} < 3, \qquad\qquad 2 < \frac{\kappa(\widetilde k)}{k} < 4,
\]
in Table \ref{tab:apart-convergence} and Table \ref{tab:apart-convergence2} respectively, which confirms, for these test cases, our expectation of a generally bounded ratio.
}

Finally, with a stopping criterion \eqref{eq:thresh}, it happened that async-Parareal generally tended to provide a better precision, according to the sequential solution from $F$. While this was probably due to the particular termination delay of asynchronous iterations, it did not prevent us, in case of Table \ref{tab:apart-convergence}, from obtaining a faster solver, for $p \ge 32$. For instance, around 12 minutes are saved over 44, using 90 processors.
{
On the contrary, Table \ref{tab:apart-convergence2} shows cases where, with sufficiently low values $\mathcal{\overline C}_{C,G}$ (regarding numbers of iterations), synchronous iterations keep performing better than asynchronous ones.
}

\section{Conclusion}
\label{sec:conclu}

Efficiently applying asynchronous iterations in time domain decomposition is rather challenging, and this primary work certainly leaves an amount of further desirable {benchmarks and} improvements. While we tackled here the basic Parareal scheme, it is somehow important to note that introducing asynchronous iterations is a design approach quite different from, and not concurrent to, classical scheme improvement, as they could actually be applied to more advanced Parareal schemes as well, as long as communication is a possible factor of performance limit in real situations. What is particularly interesting is that convergence conditions (in maximum error norm) barely change, and are even asymptotically the same, while also finite-time termination remains guaranteed. Still, despite the potential theoretical performance gain and some experimental results confirming it in practice, it is clear that such a pipeline parallel configuration drastically limits the benefits from asynchronous iterations,
{compared to spatial parallelism where each processor generally depends on several others, such as in undirected grid configurations.}
Hopefully, as pointed out by this first theoretical analysis, much more flexibility is now provided to deal with this issue, which is currently under consideration.

\bibliography{ref}
\bibliographystyle{abbrv}

\end{document}